\newcolumntype{d}[1]{D{.}{.}{-1}}
\newcolumntype{s}[1]{D{,}{,}{-1}}
\newcounter{myenumi}
\newenvironment{myenumerate}{
\begin{enumerate} 
\setcounter{enumi}{\value{myenumi}}
}{
\setcounter{myenumi}{\value{enumi}}
\end{enumerate}
}
\theoremstyle{plain}
\newtheorem{theorem}{Theorem}
\newtheorem{lemma}{Lemma}
\newtheorem{proposition}{Proposition}
\theoremstyle{definition}
\newtheorem{definition}{Definition}
\newtheorem*{remark*}{Remark}
\newtheorem{example}{Example}
\newtheorem{corollary}{Corollary}
\DeclareMathOperator{\Ch}{{Ch}}
\DeclareMathOperator{\E}{\mathbb{E}}
\newcommand{\SPDA}{SP-DA\xspace}
\newcommand{\SRDA}{SR-DA\xspace}
\newcommand{\PreserveBackslash}[1]{\let\temp=\\#1\let\\=\temp}
\newcolumntype{C}[1]{>{\PreserveBackslash\centering}p{#1}}
\newcolumntype{R}[1]{>{\PreserveBackslash\raggedleft}p{#1}}
\newcolumntype{L}[1]{>{\PreserveBackslash\raggedright}p{#1}}
\newif\ifinappendix\inappendixfalse
\crefname{apptab}
  {\protect{\ifinappendix\else Appendix \fi}Table}
  {\protect{\ifinappendix\else Appendix \fi}Table}
\renewcommand\paragraph{\@startsection{paragraph}{4}{\z@}%
                                    {1.7ex \@plus1ex \@minus.2ex}%
                                    {-1em}%
                                    {\normalfont\normalsize\bfseries}}
\title{The Large Core of College Admission Markets: Theory and Evidence\footnote{This version: August, 2022. We thank Yannai Gonczarowski, Jacob Leshno, Bobby Pakzad-Hurson, Deborah Marciano, Thayer Morrill, Alex Rees-Jones, Al Roth, Ben Roth, Jan Christoph Schlegel, Alex Teytelboym and seminars and conferences participants for helpful discussions and suggestions. We are especially grateful to Scott Kominers, who discussed an earlier version of this paper in the 2018 ASSA meeting, for many useful suggestions. This research was supported by a grant from the United States -- Israel Binational Science Foundation (BSF), Jerusalem, Israel. This research was support by the Israel Science Foundation (grants no. 1241/12 and 1780/16). Assaf Romm is supported by a Falk Institute grant. The Hungarian Higher Education Application Database (FELVI) is owned by the Hungarian Education Bureau (Oktatasi Hivatal). The data were processed by the Databank of KRTK (Centre for Economic and Regional Studies).}}
\author{Péter Biró \and Avinatan Hassidim \and Assaf Romm \and Ran I. Shorrer \and Sándor Sóvágó\thanks{Biró: Center for Economic and Regional Studies--Economics Institute, Corvinus University of Budapest (biro.peter@krtk.mta.hu); Hassidim: Bar Ilan University (avinatan@macs.biu.ac.il); Romm: Hebrew University of Jerusalem (assafr@gmail.com); Shorrer: Pennsylvania State University (rshorrer@gmail.com); Sóvágó: University of Groningen (s.sovago@rug.nl).}}
\date{}
\begin{document}

\pagenumbering{gobble}
\maketitle

\begin{abstract}
We study stable allocations in college admissions markets where students can attend the same college under different financial terms. The deferred acceptance algorithm identifies a stable allocation where funding is allocated based on merit. While merit-based stable allocations assign the same students to college, non-merit-based stable allocations may differ in the number of students assigned to college. In large markets, this possibility requires heterogeneity in applicants' sensitivity to financial terms. In Hungary, where such heterogeneity is present, a non-merit-based stable allocation would increase the number of assigned applicants by 1.9\%, and affect 8.3\% of the applicants relative to any merit-based stable allocation. These findings contrast sharply with findings from the matching (without contracts) literature.
\end{abstract}
\clearpage

\pagenumbering{arabic}

\section{Introduction}

\Copy{limitat}{In} recent years, a growing number of students are being assigned to schools through centralized clearinghouses. The success of such clearinghouses crucially relies on the use of a stable matching mechanism \citep{rothxing1994, roth2002}.\footnote{Stability is also useful for predicting behavior in decentralized matching markets \citep[e.g.,][]{banerjee2013}.}
The matching market design literature finds that a designer who wishes to implement a stable allocation has limited scope for further design. First, the rural hospital theorem determines that the same positions are filled in all stable allocations \citep{roth1984,roth1986}. Second, the set of stable allocations has the consensus property: all students prefer the outcome of the student-proposing deferred acceptance mechanism (henceforth \SPDA) to any other stable allocation \citep{gs1962,roth1984contracts}. Third, empirical and theoretical studies suggest that all students, save for a handful, receive the same assignment in all stable allocations \citep[e.g.,][]{akl2017, al2016, im2005x, kp2009, rp1999}. This last finding implies that schools have limited incentive to collect information and to misreport their preferences \citep{demange1987}.

The above-mentioned results apply to two-sided matching markets (men and women, students and schools, etc.) where agents' preferences are over potential partners from the other side. However, the environments studied and designed by economists are often more complex. For example, college applicants care not only about the study program they are assigned to, but also about the level of financial aid they receive. In this paper, we ask whether the set of stable allocations continues to be small in these more complex environments.


We study Hungarian college admissions, where colleges offer multiple levels of financial aid. In this environment, we show theoretically and empirically that it is no longer true that almost all students receive the same assignment in all stable allocations. Furthermore, different stable allocations differ in the set of positions filled and in the composition of assigned students, and there is no consensus among students on the most-preferred stable allocation. 


\medskip To gain intuition, we provide a simple example. The example is a minimal instance of our model of Hungarian college admissions.\footnote{Our model captures the structure of preferences observed in other centralized college admissions markets, including Turkey, Australia, Israel, Ukraine, Russia, and the US.} We accompany the verbal description with the notation of our model, which we introduce in \cref{sec:theory}. 

\begin{example} \label{DifferenrNumber}
There are two students, $S=\{r,p\}$, and one college, $C=\{c\}$. The college has two seats, but only one of these seats is state-funded (formally, $q^0_c=1$, $q^1_c=1$).  The college finds the rich student, $r$, more attractive than the poor student, $p$ (formally, $r\gg_c p$). The college prefers to accept the most attractive students, and to fill its capacity. The rest of the college's preferences over acceptable allocations are fully described by
\begin{align*}
\left\{(r,c,0),(p,c,1)\right\} & \succ_c \left\{(r,c,1),(p,c,0)\right\} \succ_c \left\{(r,c,0)\right\} \succ_c \\
\left\{(r,c,1)\right\} & \succ_c \left\{(p,c,0)\right\} \succ_c \{(p,c,1)\},
\end{align*}
where 1 (0) indicates admission with (without) state funding. 

The rich student, $r$, prefers to receive state funding, but she is willing to attend the college even if she does not get state funding. Formally, $r$'s preferences are $(r,c,1) \succ_r (r,c,0)\succ_r\emptyset$. By contrast, the poor student, $p$, is only interested in admission with state funding. Thus, $p$'s preferences are summarized by $(p,c,1)\succ_p\emptyset$. 

There are two stable allocations in this market: the result of \SPDA, $\left\{(r,c,1)\right\}$, where funding is allocated based on merit; and $\left\{(r,c,0), (p,c,1)\right\}$, where more positions are filled and funding is not allocated based on merit. We note that the only allocation that both students weakly prefer to both of these stable allocations has both students receiving state funding, but this allocation is not acceptable to the college. Therefore, there is no consensus among students on a student-optimal stable allocation.
\end{example}

Can we expect stable allocations to be meaningfully different in realistically-sized markets? To answer this question, we develop a large market model of Hungarian college admissions markets and study the assumptions required from this model to guarantee that the set of stable allocations is large. Our main theoretical result (\cref{prop:big}) is that the set of stable allocations is large when students are heterogeneous in their sensitivity to financial terms (like the rich student and the poor student in \cref{DifferenrNumber}). By contrast, when students are not heterogeneous in their sensitivity to financial terms---that is, when all students find both contracts with the same college to be nearly perfect substitutes or when all students do not find them to be close substitutes---then the set of stable allocations is small. We also show that both variants of DA allocate funding based on merit (\cref{prop:merit}) and that the set of merit-based allocations is small. 
 
The proof of \cref{prop:big} as well as our empirical analysis rely on the {\it preference flip algorithm}, a novel stable algorithm. To gain intuition on how our algorithm works, we revisit \cref{DifferenrNumber}. If each type of seat was controlled separately, then competition between the seats would rule out the stable allocation in which both students are assigned. However, since one college controls both seats, it can make sure that they do not compete with each other. Given a stable allocation, colleges have \textit{local market power} over students who are admitted with state funding, but who have no outside option (a contract at another college or being unassigned) that they prefer to the self-funded contract with the same college. An extreme case is when students rank the state-funded and self-funded contracts with the same college consecutively, like the rich student in our example. In this case, colleges can exercise market power by refusing to accept such students to state-funded seats, thus freeing up the state-funded seats which can then be used to recruit price-sensitive students like the poor student in our example.\footnote{This argument is not precise, since by recruiting new students the college may change the outside options available to its other students. Our analysis takes this challenge into account.} The preference flip algorithm can be interpreted as allowing colleges to exercise their local market power.

We use the preference flip algorithm for exploring the size of the set of stable allocations. However, implementing it will require more than merely changing the matching algorithm. Specifically, the preference flip algorithm requires colleges to offer financial aid selectively based on students' sensitivity to financial terms, giving students an incentive to misrepresent their sensitivity to financial terms. Still, colleges can use hard information to infer students' sensitivity to financial terms. For example, in the American context, such information is available to colleges through the Free Application for Federal Student Aid (FAFSA) form. In addition, many American colleges hire companies to personalize their financial aid packages using large propitiatory datasets \citep{lieber2021price,CP2019}.

We next assess the size of the set of stable allocations empirically using administrative data from the Hungarian college admissions system. We document substantial heterogeneity in applicants' sensitivity to financial terms. In line with our large market analysis, both variants of DA result in essentially identical allocations, while the outcome of the preference flip algorithm differs for more than 9,000 applicants (approximately 8 percent of the applicants), with slightly more winners than losers. More importantly, the preference flip algorithm increases the number of applicants assigned to college by 1.9 percent (approximately 1,500) with approximately 2,100 unassigned applicants gaining admission and approximately 600 losing their place.\footnote{Having found that the set of stable allocations could be large and in the absence of the consensus property, a natural objective is to identify stable allocations that are optimal according to some criterion. However, we show that achieving this objective may not be feasible with current computing technology. Specifically, \cref{prop:np} shows that finding the largest stable allocation is NP-hard. This means that finding an efficient (polynomial-time) algorithm for solving this problem will establish that $P=NP$ (which is widely believed to be false).} Since programs in the capital---which are generally prestigious and highly demanded---typically fill all their positions both under DA and under our alternative, the gains in enrollment mostly accrue to colleges in the periphery.
 
We analyze the characteristics of those who benefit and lose from the preference flip algorithm. Applicants who prefer the outcome of the non-merit-based preference flip algorithm to the outcome of DA come from lower socioeconomic background relative to applicants who prefer the outcome of DA. The preference flip algorithm also increases geographic mobility by increasing the number of applicants assigned to a college outside their county of residence.

Our data does not include hard information that would allow inferring applicants' sensitivity to financial terms. For this reason, our main analysis uses an algorithm which only applies local market power over students who are extremely insensitive to financial aid (as revealed by their preference reports). Had hard information been collected, one can argue that these students could plausibly be identified. In \cref{app:narrow_and_broad}, we expand the analysis in two directions. First, we assess the consequences of being able to detect only a subset of these students. Second, we analyze algorithms that apply market power more broadly. We find that the effectiveness of our approach increases with the quality of information on students' sensitivity to funding.

Our findings speak to the broader research question of how to design college admissions processes in the presence of constraints on additional resources such as financial aid, dormitories, etc. Our analysis does not consider many policy levers that are available to policymakers. For example, we fix the number of scholarships the government allocates to each program, we keep scholarships indivisible, and we require that admission and financial terms be determined simultaneously based on preference reports and that the outcome is stable. Even under these restrictions, we find a substantial scope for market design to affect the outcome.

The remainder of the paper is organized as follows. Following a short review of related literature, \cref{sec:institutional} describes college admissions in Hungary. \cref{sec:theory} introduces a formal model of Hungarian college admissions. \cref{sec:theory1} explores the set of stable allocations. \cref{sec:large_market} includes the large market analysis. \cref{sec:results} reviews the empirical findings. \cref{sec:discussion} concludes. Proofs are relegated to the appendix throughout.

\subsection{Related Literature}

Empirical and theoretical studies show that the set of stable allocations is typically small \citep{akl2017, al2016, im2005x, kp2009, rp1999, storms2013}. Prior to these studies, a large portion of the literature on the theory of two-sided matching markets was motivated by the potential multiplicity of stable allocations. Examples include studies of the structure of the set of stable allocations \citep{knuth1976}, of fair stable allocations  \citep{kk2006, ys2011}, and of incentives \citep{em2007, roth1982, sonmez1999}. The rural hospital theorem \citep{roth1984,roth1986} refuted suggestions that changing the way the National Residency Match Program treats medical graduates and hospitals may change the number of doctors assigned to rural hospitals.

Truthful reporting to the student-proposing DA mechanism is a weakly dominant strategy for students, and there is no stable matching mechanism that makes truthful reporting dominant for both sides of the market \citep{df1981, roth1982}. \citet{demange1987} show that schools' incentives to manipulate \SPDA are intimately related to the multiplicity of stable allocations.\footnote{Numerous studies have analyzed the optimal behavior of schools when the \SPDA mechanism is in place \citep[e.g.,][]{cs2014, ehlers2004,ku2006, rr1999,sonmez1997}.} Several studies show that in large markets it is safe for schools to report their true preferences to \SPDA \citep[][]{akl2017, ab2012, im2005x, kp2009, lee2017}.

Complex two-sided matching markets are studied in the matching-with-contracts literature \citep{hm2005}. Much of this literature focuses on identifying conditions under which \SPDA remains stable and strategy-proof for students. Examples include \citet{Fleiner2003}, \citet{hk2010b}, \citet{hm2005}, \citet{hkw2015}, \citet{kc1982}, and \citet{roth1984contracts}.\footnote{Applications related to college admissions include \citet{abizada2016}, \citet{af2017}, \citet{aygunbo}, \citet{orhan_turhan_2017}, \citet{nei2016strategic}, \citet{pakzad2014stable}, \citet{westkamp2013}, and \citet{yenmez2015}.} Most closely related is \citet{hk2015}. 

\citet{hassidim2017need} use the results of \citet{hk2015} to show that \SPDA is stable and strategy-proof in more general college admissions environments. They show that different stable allocations may have different cardinality. The proofs in our paper rely on a different construction, which allows us to derive further results. We contribute to \citet{hk2015} and \citet{hkw2015} in several ways. First, we identify a real-life market design application where colleges' choice functions satisfy the hidden substitutes condition but fall outside the (unilateral) substitutes domain.\footnote{To see that college preferences are not (unilaterally) substitutable \citep{hk2010b}, note that in \cref{DifferenrNumber}, college $c$ rejects $(p, c, 1)$ from the menu $\left\{(p, c, 1), (r, c, 1)\right\}$ but does not reject this contract from the larger menu $\left\{(p, c, 1),(r, c, 1),(r, c, 0)\right\}$. 
Colleges' preferences also do not meet any other condition that guarantees the existence of a student-optimal stable allocation \citep{hrs2019}.} Second, we provide a large market analysis. Third, we provide a complete characterization of the set of stable allocations in the domain of Hungarian college admissions markets.\footnote{The substitutable completions approach of \citeauthor{hk2015} guarantees existence, but it is not guaranteed to identify all stable allocations. Indeed, one can verify that, in \cref{DifferenrNumber}, for any substitutable completion of the college's choice function, $Y^{\SPDA}$ is the unique allocation that is stable with respect to the completion.} Fourth, we show that different stable allocations may have different cardinality in this more restrictive domain, and that stable allocations in this domain may result in fewer assigned students relative to \SPDA. Fifth, we consider the practical aspect of computation.

Large market models have been studied extensively in the market design literature. Examples include \citet{an}, \citet{abh2014, anr, akl2017}, \citet{al2016}, \citet{immorlica2020information}, \citet{kp2009}, \citet{kpr2013}, \citet{lee2017}, and \citet{leshno2017cutoff}. Most closely related are \citet{im2005x, im2015} and \citet{rheinganslarge} who show that even in the absence of contracts, certain regularities in preferences can lead to a large set of stable allocations. These papers study settings where the rural hospital theorem and the consensus property apply. By contrast, we study markets with contracts and show that different stable allocations may differ in the number of students assigned to college and where the consensus property does not hold.

Our paper is related to studies of reserve design \citep{orhan_turhan_2017,dkps2018,AYGUN2020105069,H1b}. In the context of school choice, a key observation in this literature is that students may be indifferent between different seats in the same school, but some seats are reserved for certain groups of students. This implies, using our terminology, that schools have local market power over students from these groups. The reserve-design literature focuses on the effect of different ways the mechanism can break students' preference ties in order to form strict student rankings of contracts, while keeping the priorities at each seat fixed. By contrast, we study an environment where students have strict preferences over all contracts.

Our results shed light on the policy debate on market power in higher education \citep[see, e.g.,][]{hoxby2000}. This literature gained traction after the U.S. Department of Justice brought an antitrust case against a group of elite colleges  for sharing prospective students' financial information and coordinating their financial aid policy. MIT contested the charges, claiming that this practice prevents bidding wars over the best students and thus frees up funds to support needy students, and that MIT does not  profit financially from this practice \citep{overlap}. In 1994, Congress passed the Improving America's Schools Act, whose Section~568 permits some coordination and the sharing of information between  institutions with a need-blind admissions policy. Our findings provide support to MIT's arguments. We show that even in the absence of a motive to increase profit, colleges have an incentive to apply market power in order to improve the quality of their incoming cohorts, and that the consequences for students are heterogeneous. On average, needy students gain, in part at the expense of more wealthy students. Furthermore, information about students' sensitivity to financial aid, is necessary to facilitate such behavior.

\section{Background: College Admissions in Hungary}\label{sec:institutional}

Each year about 80,000 applicants are assigned to undergraduate college programs in Hungary.
Applicants are assigned to \textit{programs}, i.e., a specific major in a specific college.\footnote{We focus on admissions to undergraduate programs, which include three types of programs: bachelor's degree programs (with a typical study duration of 3 years), combined bachelor--master's programs (with a typical study duration of 5 years), and tertiary vocational education (with a typical study duration of 2 years).} The admissions process has been centralized since 1985. The centralized clearinghouse has been using variants of DA to assign applicants to colleges since 1997.  

Historically, higher education in Hungary was free. However, in recent years, the government has capped the number of state-funded (free) seats, and programs are allowed to offer admission to self-funded (tuition-paying) seats as well.\footnote{Citizens of the European Economic Area who have not yet graduated from higher education are eligible for state funding.} To implement this change, the centralized clearinghouse started requesting applicants to submit a rank-order list (ROL) of \textit{alternatives}, specifying that the student attends a program under particular financial terms (e.g., economics in the University of Debrecen with state funding). This flexibility was crucial in order to allow applicants to express their preferences. For example, some applicants may not be willing or able to afford paying tuition, while others may have a strong preference for certain programs, and be less price-sensitive.

Applicants may rank as many alternatives as they wish. Submitting an ROL with three programs (corresponding to up to six alternatives) only requires paying the fixed application fee of \$50 (9,000HUF). Applicants are required to pay a registration fee of \$11 (2,000HUF) for each additional program in their ROL. Disadvantaged applicants are exempt from these fees.\footnote{To be eligible for disadvantaged status, an applicant must have a per-capita household income that is lower than 130 percent of the minimum pension (i.e., lower than approximately \$1,500 a year).}

College programs report to the mechanism their capacity (i.e., the maximal number of applicants they can accept). Admissions priorities in each program are based on a weighted average of several variables (mainly academic performance in the 11th and 12th grades and matriculation exam scores, but also credits for disadvantaged and disabled applicants, for applicants who demonstrate fluency in another language, and for a small number of gifted applicants). Different study programs may use different weighting schemes (e.g., a computer science program may assign a greater weight to physics grades relative to a psychology program).

In 2007, the year we focus on, to accommodate the multiple modes of financing, the mechanism required that the programs also report the number of state-funded seats they offered. This took place prior to the beginning of the admissions process.\footnote{In 2008, Hungarian college admissions adopted a more complex variant of DA \citep{BIRO20103136, shorrersovago2018}. Although our analysis continues to hold under the new mechanism, we focus on 2007 to highlight that our results do not rely on the new, unique features.} The mechanism then created two ``auxiliary programs'' with capacities corresponding to the number of state-funded and self-funded seats in the program, and endowed both with the priorities of the program.\footnote{This description applies to full-time programs. There was an additional constraint on the total number of state-funded seats in each field of study, which was only binding for part-time programs. Additionally, part-time programs in computer science and engineering use slightly different priorities for state-funded and self-funded seats (they assign a lower weight to mathematics scores of applicants for self-funded seats relative to state-funded seats). Our empirical analysis takes this into account (see \cref{app:section:empirical_findings}).} Finally, student-receiving DA (\SRDA)---i.e., auxiliary-program-proposing DA---was used to determine the final assignment.

\Cref{tab:alternatives} presents information on the alternatives that colleges offered in 2007. The applicants' choice set consisted of 3,740 alternatives, corresponding to 2,289 study programs. The number of study programs that offered both state-funded and self-funded alternatives was 1,451. Some study programs were available with state-funding exclusively (172), and other programs---mostly part-time programs---were available with self-funding exclusively (666). Panel B shows that  93,999 of the 111,685 available seats (84.1 percent) were available in study programs that offered both state-funded and self-funded alternatives. The overwhelming majority of the seats that were offered with self-funding exclusively (about 91 percent) were in the form of part-time education.

\begin{table}[htpb!]
\centering\footnotesize
\caption{Applicants' choice set}
\vspace{-1.5em}
\label{tab:alternatives}
\begin{tabular}{l S[table-number-alignment = center, table-format = 5.0] S[table-number-alignment = center, table-format = 5.0] S[table-number-alignment = center, table-format = 5.0]} \\ \hline\hline
 & \multicolumn{1}{c}{Total} & \multicolumn{1}{c}{Full-time} & \multicolumn{1}{c}{Part-time} \\  & \multicolumn{1}{c}{(1)} & \multicolumn{1}{c}{(2)} & \multicolumn{1}{c}{(3)} \\ \hline \multicolumn{4}{l}{A. {\it Alternatives}} \\
Number of programs &         2289 &         1357 &          932 \\
\multicolumn{1}{l}{-- state- and self-funded} &         1451 &         1133 &          318 \\
-- state-funded exclusively &          172 &          158  &           14 \\
-- self-funded exclusively &          666 &           66  &          600 \\
Number of alternatives &         3740 &         2490 &         1250 \\
& & & \\
\multicolumn{4}{l}{B. {\it Capacities}} \\
\multicolumn{1}{l}{State-funded and self-funded} &        93999 &        75197 &         18802 \\
 -- state-funded &        47809 &        43084 &          4725 \\
 -- self-funded &        46190 &        32113 &         14077 \\
State-funded exclusively &          917 &          806 &           111 \\
Self-funded exclusively &        16769 &          691 &         16078 \\ \hline 
Total &       111685 &        76694 &         34991 \\ \hline\hline 
\multicolumn{4}{p{9.5cm}}{{\it Notes}: The table presents summary statistics of the applicants' choice set. Panel A presents the number of alternatives by the financial terms. Panel B displays the number of seats by the financial terms.}
\end{tabular}

\end{table}

\paragraph{Timeline.} The application process proceeds as follows. First, the centralized clearinghouse publishes a booklet that includes the rules of the college admissions process together with the list of the alternatives (program--financial terms pairs), and the number of seats available in each alternative in November. Applicants submit their ROLs in mid-February. In May and June, applicants take their matriculation exam. Finally, in mid-July, the clearinghouse notifies applicants about their placement and publishes the priority-score cutoffs for each alternative, i.e., the minimum priority score that was needed to gain admission.

\paragraph{Financing Colleges.} The government funds colleges on a per-student basis, irrespective of the financial terms. Additionally, colleges collect tuition from self-funded students,\footnote{Tuition for a 3-year  bachelor's degree program ranges from \$3,280 to \$6,560  (600,000--1,200,000HUF). } and receive additional compensation for state-funded students from the government. Thus, holding the number of state-funded (self-funded) students constant, colleges' revenue increases with each additional admitted self-funded (state-funded) student.

\section{A Model of Hungarian College Admissions}\label{sec:theory}

We use the many-to-one matching-with-contracts model of \citet{hm2005} to describe \textit{Hungarian college admissions markets}. There is a finite set of \textit{colleges}, $C$, a finite set of \textit{students}, $S$, and a set of \textit{financial terms}, $T=\left\{0,1 \right\}$. A \textit{contract} is a tuple $(s,c,t)\in S\times C \times T$ that specifies a student, a college, and 
financial terms (with $t=1$ representing state funding, and $t=0$ representing self funding). 

An \textit{allocation} is a subset $Y \subseteq S\times C \times T$. An allocation $Y$ is \textit{feasible} if no student is included in more than one contract. Formally, for each student $s$, $\left|Y\cap \left(\left\{ s\right\}\times C\times T \right)\right|\leq 1$.
Given an allocation, $Y$, we let $Y_S$  denote the set of students  involved in some contract in $Y$. Formally, $Y_S :=\left\{s\in S\mid  Y \cap  \left(\left\{ s\right\}\times C\times T \right) \ne \emptyset \right\}$. Similarly, $Y_C := \left\{c\in C\mid  Y \cap  \left( S\times\left\{ c\right\}\times T \right) \ne \emptyset \right\}$. 

\paragraph{Students' preferences} Each student, $s$, has strict preferences over contracts in $\{s\} \times C \times T$  (not all of them are necessarily available) and an outside option which we denote by $\emptyset$. We denote student $s$'s preference relation by $\succ_s$.  Students' preferences, therefore, induce  weak preferences over all feasible allocations, where students only consider their own assignment.  

\paragraph{Colleges' preferences} Each college, $c$, has strict preferences, $\succ_c$, over all feasible allocations in $S\times \{c\} \times T$.\footnote{None of our results relies on the assumption that colleges' preferences are strict.} These preferences induce weak preferences over all feasible allocations, where $c$ strictly prefers $Y$ to $Y'$ if and only if $Y\cap \left(S\times \{c\} \times T\right) \succ_c Y'\cap \left(S\times \{c\} \times T\right)$. The preferences of each college, $c$, satisfy the following conditions. 

First, $c$ is associated with two numbers, $q^1_c$ and $q^0_c$, representing a constraint on the number of students that can be accepted under each of the financial terms. The college $c$ prefers the empty allocation to all allocations that \textit{violate $c$'s quotas}, that is, that assign to $c$  more than $q^t_c$ students under the financial terms $t$.  

Second, $c$ has a complete order over $S \cup \{\emptyset\}$, denoted by $\gg_c$, representing a ranking over students. Given an allocation  $Y \subseteq S \times \{c\} \times T$ (an assignment of students to the college under some financial terms),  if $q_c^t$ is not binding  then the college prefers to accept an additional student $s$ under the financial terms $t$ if and only if $s\gg_c \emptyset$. Formally, if $s\notin Y_S$ and $\left|Y\cap \left( S\times \{c\} \times \{t\}\right)\right| < q_c^t$, then 
$Y\cup\left\{\left(s,c,t \right)\right\}\succ_c Y$ iff $s\gg_c\emptyset$. Additionally, $c$ prefers to replace student $s$ who receives the financial terms $t$ with another student $s'$ who is not assigned to $c$ (under the same financial terms) if and only if $s'\gg_c s$. Formally, for all ${Y\subseteq S\times \{c\} \times T}$, if $(s,c,t)\in Y$ and $s'\notin Y_S$, then $ \left(Y\cup \left\{\left(s',c,t \right)\right\}\right)\setminus \left\{\left(s,c,t \right)\right\} \succ_c Y$ iff $s'\gg_c s$.

Third, so long as quotas are not violated, the composition of the incoming cohort is lexicographically more important to the college relative to the way funding is allocated. Formally, let $Y,Y'\subseteq S\times \{c\} \times T $ be two feasible allocations that do not violate $c$'s quotas. Then if $Y_S=Y'_S$ (i.e., $Y$ and $Y'$ differ only in the identity of the recipients of state funding in $c$) and $Y\succ_c Y'' \succ_c Y'$ for some $Y''\subseteq S\times \{c\} \times T$, it follows that $Y''_S=Y_S$.   	

\paragraph{Discussion of theoretical assumptions} We refrain from assuming that students always prefer to receive state funding for several reasons. First, this generality allows our model to capture environments where the ranking of alternatives is more ambiguous, such as admission with and without a dormitory assignment. Second, in practice, some students are not eligible to receive financial aid and a small fraction of applicants report unusual preferences \citep{shorrersovago2018}.

According to our assumptions, subject to quotas,  colleges' preferences depend more on the size and quality of the incoming cohort than they do on the distribution of funding. This assumption is not unusual in the literature \citep[e.g.,][]{kim2010,heo2017financial}, and is consistent with preferences of colleges in other markets \citep[e.g.,][]{hrspp}.  Our assumptions rule out redistributive motives. As highlighted by \citet{CP2019}, the presence of such motives will strengthen our arguments.  

Our model also abstracts from upper quotas on state-funded seats in part-time programs in the same field \citep{BIRO20103136}. We note that all of our theoretical results generalize to this more complex environment. The upper quotas create links between part-time programs, making their choice functions more complex.
The added complexity creates further opportunities to apply local market power (e.g., transfer students between part-time programs). We choose not to take advantage of this feature, which is specific to the Hungarian market, in our empirical analysis.

The special case of our model where $\min\{q_c^0, q_c^1\}=0$ for all $c$ corresponds to \textit{two-sided many-to-one matching (without contracts) markets with responsive preferences} \citep{roth1985college}. Given such a market, we often refer to college $c$'s only positive quota as \textit{$c$'s quota} and to allocations as \textit{matchings}.

\paragraph{Individual rationality} The preferences of each college $c$ induce a choice function $\Ch_c: 2^{S\times C \times T}\rightarrow 2^{S\times \left\{c\right\} \times T}$, that identifies the feasible subset of $Y\cap{\left(S\times \left\{c\right\} \times T\right)}$ most preferred by $c$. Similarly, for students, $\Ch_s:2^{S\times C\times T} \rightarrow 2^{\{s\} \times C \times T}$ 
chooses the most preferred acceptable contract involving $s$ (if one exists). An allocation $Y$ is \textit{individually rational} if all agents choose all the contracts in which they are involved, that is 
$\bigcup_{c\in C} \Ch_c \left( Y \right)=\bigcup_{s\in S} \Ch_s \left( Y \right)=Y$. Individually rational allocations cannot violate quotas since colleges prefer the empty set to allocations that violate their quotas. Furthermore, they are feasible since students never choose more than one contract ($\left|\Ch_s \left(\cdot\right)\right|\leq 1$ for all $s$).

\paragraph{Stability} 
An allocation $Y$ is \textit{blocked (through $Z$)} if there exists a college, $c$, and a non-empty set $Z\subseteq\left({S\times \left\{c\right\} \times T} \right)\setminus Y$ such that $Z \subseteq \Ch_c(Y\cup Z)$ and $Z\subseteq {\cup}_{s\in S}\Ch_s \left( Y \cup Z \right)$.  An allocation is \textit{stable} if it is individually rational and not blocked. In words, an allocation is stable if no coalition of agents  can achieve a weak improvement on its own. 

The following proposition provides a complete characterization of stable allocations.

\begin{proposition} \label{lem:stable_new}
An allocation $Y$ is stable if and only if all of the following hold:
\begin{enumerate}
\item $Y$ is individually rational.
\item $Y$ is not blocked through a singleton $\left\{\left(s,c,t\right)\right\}$. 
\item $Y$ is not blocked through a change in the financial terms of one student, and a contract with a new student using the freed-up space. Formally, for all  $(s,c,1-t)\notin Y$ and $(s',c,t) \in Y$,  $Y$ is not blocked through $\left\{\left(s',c,1-t\right),(s,c,t) \right\}$.
\item $Y$ is not blocked through a change in financial terms that keeps students in the same college (but under different financial terms). Formally, $Y$ is not blocked through $Z$ such that $(s,c,t)\in Z$ implies $(s',c,1-t)\in Y$. 
\end{enumerate}
\end{proposition}

We note that Conditions~3 and 4 of \cref{lem:stable_new} can only be violated in the presence of multiple contractual terms. Therefore, in two-sided many-to-one matching (without-contracts, i.e., when $|T|=1$) markets with responsive preferences the proposition reduces to the statement that a matching is stable if and only if it is individually rational and not blocked by a student--college pair.

\paragraph{Certainly stable allocations}
Allocation $Y$ is \textit{certainly stable} if Conditions 1--3 of \cref{lem:stable_new} hold, as well as the following condition:

\begin{enumerate}
\item[4'.]  For each $(s,c,t)\in Y$, if $(s,c,1-t)\succ_s (s,c,t)$ then $\left|Y\cap\left(S\times \left\{c\right\}\times \left\{1-t\right\} \right)\right|=q_c^{1-t}$ and $(s',c,1-t)\succ_{s'}(s',c,t)$ for all $(s',c,1-t)\in Y$.
\end{enumerate}
When Conditions~1-3 of \cref{lem:stable_new} hold, Condition 4'  
implies Condition~4 of \cref{lem:stable_new}. Therefore, a certainly stable allocation is also stable.\footnote{The converse is not true because students may want to swap their financial terms in the same college in a stable allocation, as long as the college prefers that they do not.} The notion of certain stability proves useful in our empirical analysis because it only relies on quotas and rankings (i.e., $q_c^0$, $q_c^1$, and $\gg_c$) but not on the full description of $\succ_c$, and because there are fewer conditions to verify (each condition refers to no more than three agents).

\paragraph{Comparing stable allocations} When comparing the stable allocation $Y'$ to the stable allocation $Y$, the set of \textit{winners}, $W(Y',Y)$, consists of all students that prefer their assignment under $Y'$. The set of winners comprises of the following subsets:
\begin{enumerate}
\item The set of \textit{newly assigned} students, $N(Y',Y)$, consists of students who are assigned under $Y'$ but not under $Y$.
\item The set of \textit{funding winners}, $FW(Y',Y)$, consists of students who are assigned to the same college in both allocations, but prefer their financial terms under $Y'$. If all students prefer to be funded, students in $FW(Y', Y)$ receive state-funding under $Y'$, but not under $Y$.
\item  The set of \textit{upwards transfers} $UT(Y',Y)):=W(Y',Y)\setminus \left( N(Y',Y) \cup FW(Y', Y)\right)$ consists of winners who get assigned to a new, preferred college. 
\end{enumerate}

We define the set of {\it losers}, and its subsets, {\it newly unassigned}, {\it funding losers}, and {\it downward transfers}, symmetrically.

\paragraph{Merit-based allocations}

We say that an allocation $Y$ \textit{allocates funding based on merit} if 
$s$ weakly prefers $(s,c,t)$ to $(s,c,t')$
whenever $(s,c,t)$ and $(s',c,t')$ belong to $Y$ 
and  $s\gg_c s'$. 
When  $Y$ allocates funding based on merit, within each college, $c$, higher-ranked students according to $\gg_c$ do not envy the financial terms of lower-ranked students who are also assigned to $c$. 
Specifically, if all students prefer to be funded, the recipients of funding in each college are all ranked higher than all self-funded students. A stable allocation that allocates  funding based on merit is a \textit{merit-based stable allocation}.\footnote{
\citet{as2003} establish a connection between stability and justified-envy-freeness in what became the standard school choice framework. 
\citet{rrs2020} study the relation between stability and the elimination of justified envy in a general matching with contracts markets. In the context of multi-dimensional constraints, \citet{delacretaz2016refugee} note that a weak form of no justified envy is independent of stability.}

\paragraph{Related matching markets} The \textit{related (two-sided) matching market (without contracts)} is an auxiliary market where each program-funding-level pair plays the role of an individual college. This mathematical object often simplifies our exposition and analysis. 
Formally, given a Hungarian college admissions market, $\left< S,C,\left\{\succ_c,\gg _c, q_c^0, q_c^1\right\}_{c \in C},\{\succ_s\}_{s \in S} \right>$, the \textit{related  matching market} comprises the set of \textit{auxiliary colleges} ${\hat{C} \equiv C \times T}$ with responsive preferences, where $(c,t)\in \hat{C}$ has quota $q^t_c$ and uses the ranking $\gg_c$, and the set of students $S$ with preferences  $\{\hat{\succ}_s\}_{s \in S}$, such that for every $s\in S$, $c\in C$, and $t\in T$,  $(c,t)\hat{\succ}_s (c',t')$ if and only if $(s,c,t)\succ_s (s,c',t')$. A feasible allocation $Y$ and a matching (in the related matching market) are {\textit{corresponding}} if {$s$ matched with $(c,t)$ if and only if $(s,c,t)\in Y$.}

%

\paragraph{Deferred acceptance}
Student-proposing deferred acceptance takes a profile of preferences as input and outputs an allocation. The allocation is the result of the following process. In each round, each student proposes the most preferred acceptable contract from which she has not yet been rejected, if such a contract exists. Each college then rejects all but the most preferred subset of contracts from these proposals. The algorithm terminates when no proposal is rejected, and the output is the set of contracts proposed in the last round. Given a Hungarian college admissions market, we denote the outcome of \SPDA by $Y^{\text{\SPDA}}$.

Student-receiving deferred acceptance outputs the allocation that corresponds to auxiliary-college-proposing deferred acceptance (formally, the allocation to which this matching corresponds). Of note, in each round of the algorithm, a college may make multiple offers to the same student (one by each auxiliary college). Given a Hungarian college admissions market, we denote the outcome of \SRDA by $Y^{\text{\SRDA}}$.  

In two-sided many-to-one matching (without contracts) markets with responsive preferences, \SRDA is known as college-proposing DA. We intentionally avoid this label to highlight the fact that colleges are not necessarily proposing the most preferred subset of contracts that were not rejected. Since colleges' preferences are not substitutable \citep{hm2005}, such a process would sometimes ``renege'' on proposals that were not rejected due to the rejection of complementary contracts.\footnote{To see that a ``college proposing'' process would sometimes ``renege'' on proposals, consider a modification to \cref{DifferenrNumber} where neither student finds the unfunded seat acceptable. In the first round, the poor student is offered a funded seat and the rich student is offered an unfunded seat, which she declines. As a result, in later rounds, the college would like to renege on the offer made to the poor student, and instead offer the funded seat to the rich student.}

\begin{proposition}\label{lem:outcome_equiv}
$Y^{\text{\SPDA}}$ corresponds to the student-optimal stable matching  in the related market.
\end{proposition}

\begin{corollary}[Strategic properties of DA] \label{prop:dgs}
In Hungarian college admissions markets:
\begin{enumerate}
\item \SPDA is strategy-proof for students.
\item A student can manipulate \SRDA if and only if she strictly prefers $Y^{\text{\SPDA}}$ to $Y^{\text{\SRDA}}$. 
\end{enumerate}
\end{corollary}

\cref{prop:dgs} provides a bound on the number of applicants with incentives to misrepresent their preferences to \SRDA.

\section{The Set of Stable Allocations}\label{sec:theory1}
We begin in \cref{sec:da} by studying the set of merit-based stable allocations. We show that this set is nonempty, and that it includes the allocations that result from both \SPDA and \SRDA. We also show that each college recruits the same number of students across all merit-based stable allocations. In \cref{sec:beyond_da}, we note that different stable allocations differ in the number of assigned students and that there does not necessarily exist a student-optimal stable allocation. We introduce the preference flip algorithm, a (non-merit-based) algorithm designed to increase the number of admitted students compared to \SRDA and prove that it is stable. We then demonstrate that achieving certain policy goals may not be feasible with current computing technology. Specifically, \cref{prop:np} states that finding a largest stable allocation 
is NP-hard.\footnote{Recent advances in the market design literature show that solving some NP-hard problems is feasible in practice \citep[see][]{agoston2016integer, AGOSTON2021, LB_M}.} 

\subsection{Merit-Based Stable Allocations}\label{sec:da}


We begin by establishing a connection between merit-based stable allocations and stable allocations in the related matching market.

\begin{theorem}\label{prop:merit}\label{lem:stable_implies_stable}
Given a Hungarian college admissions market: \begin{enumerate}
\item  $Y$  is a  merit-based certainly stable allocation if and only if the corresponding matching is stable in the related matching market.
\item If $Y$  is a  merit-based stable allocation, then there exists a merit-based certainly stable allocation  $Y'$ that all students weakly prefer which assigns all students to the same colleges (potentially under different contractual terms). 
\end{enumerate}
\end{theorem}

\begin{corollary}[Existence]\label{prop:similar}
The set of merit-based stable allocations is nonempty in Hungarian college admissions markets. In particular, it includes $Y^{\text{\SPDA}}$ and $Y^{\text{\SRDA}}$.
\end{corollary}

\begin{corollary}[Weak rural hospital theorem] \label{prop:rural}
In Hungarian college admissions markets: 

\begin{myenumerate}
\item The set of students assigned to some college is identical under all merit-based stable allocations. Formally, $\left[Y^{\text{\SPDA}}\right]_S=\left[Y\right]_S$ for any merit-based stable allocation $Y$. 
\end{myenumerate}
Moreover, if $Y$ is a certainly stable merit-based allocation, then for each $c\in C$ and $t\in \left\{0,1\right\}$:
\begin{myenumerate}
\item The number of state-funded (self-funded) students assigned to $c$ is equal under $Y^{\text{\SPDA}}$ and $Y$. Formally, $\left|Y^{\text{\SPDA}}\cap \left(S\times \{c\}\times \{t\}\right)\right|=\left|Y\cap \left(S\times \{c\}\times \{t\}\right)\right|$.
\item If college $c$ does not fill one of its quotas under \SPDA, the same students are assigned to $c$ with these financial terms under Y. Formally, if ${\left|Y^{\text{\SPDA}}\cap \left(S\times \{c\}\times \{t\}\right)\right|<q_c^t}$ then $Y^{\text{\SPDA}}\cap \left(S\times \{c\}\times \{t\}\right)=Y\cap \left(S\times \{c\}\times \{t\}\right)$.
\end{myenumerate}
\end{corollary}

Taken together, \cref{prop:similar,prop:rural} show that changing the orientation of DA cannot lead to a change in the number or even the composition of the assigned students.

\begin{corollary}[Weak consensus property] \label{prop:consensus} 
In Hungarian college admissions markets, all students weakly prefer $Y^{\text{\SPDA}}$ to any other merit-based stable allocation. 
\end{corollary}

Taken together, \cref{prop:similar,prop:consensus} imply that it is possible to identify the student-optimal merit-based stable allocation time-efficiently (using the time-efficient \SPDA).




\subsection{Beyond Merit-Based Stable Allocations}\label{sec:beyond_da}
We begin by showing that Hungarian college admissions markets may admit stable allocations where funding is not allocated based on merit, and that the rural hospital theorem and the consensus property do not extend to this domain.

\begin{proposition}\label{prop:cardinality}
In Hungarian college admissions markets:
\begin{enumerate}
	\item There may exist non-merit-based stable allocations. 
	\item Students may disagree on the most preferred stable allocation.
	\item Different stable allocations may have more or fewer assigned students than $Y^{\text{\SPDA}}$.
\end{enumerate}
\end{proposition}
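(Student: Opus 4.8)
The three assertions here---that students may disagree on their most preferred stable allocation, and that some stable allocation may have either more or fewer assigned students than $Y^{\text{\SPDA}}$---are all existence claims, so the plan is to establish each by exhibiting a market together with the relevant stable allocations. \Cref{DifferenrNumber} already does most of the work. Its two stable allocations are $Y^{\text{\SPDA}}=\{(r,c,1)\}$ and $Y^{*}=\{(r,c,0),(p,c,1)\}$. For the disagreement claim I would simply read off the students' rankings: $r$ strictly prefers $Y^{\text{\SPDA}}$ (she obtains state rather than self funding), while $p$ strictly prefers $Y^{*}$ (she is assigned rather than unassigned), so their most preferred stable allocations differ. The same example delivers the ``more'' half of the second claim, since $\left|Y^{*}_S\right|=2>1=\left|Y^{\text{\SPDA}}_S\right|$.

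It remains to exhibit a market with a stable allocation having \emph{fewer} assigned students than $Y^{\text{\SPDA}}$. A single college with our lexicographic-composition preferences cannot do this: such a college fills both of its seats whenever enough students are willing, so the extra stable allocations it generates weakly increase enrollment, as in \cref{DifferenrNumber}. I would therefore use two colleges. Take students $S=\{1,p,w\}$ and colleges $A$ (with $q^1_A=q^0_A=1$ and $1\gg_A p\gg_A w$) and $B$ (with $q^1_B=1$, $q^0_B=0$, and only $p$ acceptable). Let the preferences be $(1,A,1)\succ_1 (1,A,0)\succ_1\emptyset$, $(p,A,1)\succ_p (p,B,1)\succ_p\emptyset$, and $(w,A,0)\succ_w\emptyset$. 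Tracing \SPDA gives $Y^{\text{\SPDA}}=\{(1,A,1),(w,A,0),(p,B,1)\}$, with three assigned students: $1$ takes the scarce state seat at $A$, $w$ takes $A$'s self seat, and $p$, rejected at $A$, falls back to $B$. The candidate with fewer students is $Y'=\{(1,A,0),(p,A,1)\}$, in which $A$ exercises local market power---admitting the higher-ranked $p$ to the state seat and pushing the flexible $1$ to the self seat---so that $w$ is displaced and $B$ is left empty; this assigns only two students.

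The main obstacle is verifying that $Y'$ is stable, and in particular that the conservation of enrollment which frustrates the single-college case genuinely breaks here. Three checks matter. First, the displaced student $w$ cannot block: her only acceptable contract is $(w,A,0)$, and $A$ strictly prefers the cohort $\{1,p\}$ to $\{p,w\}$ because $1\gg_A w$. Second, $p$ sits at her favorite contract $(p,A,1)$, so she has no incentive to block with $B$, which is why $B$ stays empty. Third---and this is the crux---the top student $1$ would prefer the state seat but cannot obtain it, since upgrading $1$ to state forces $A$ to drop $p$, moving from cohort $\{1,p\}$ to cohort $\{1\}$. I would show $A$ strictly prefers every cohort-$\{1,p\}$ allocation to every cohort-$\{1\}$ allocation: the condition that $A$ prefers to fill a free seat with any student it ranks above $\emptyset$ gives $\{(1,A,1),(p,A,0)\}\succ_A\{(1,A,1)\}$, and the lexicographic-composition condition forces the allocations of each fixed cohort to form a contiguous block of $\succ_A$, so the two disjoint blocks are ordered with $\{1,p\}$ above $\{1\}$. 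Hence $\Ch_A\bigl(Y'\cup\{(1,A,1)\}\bigr)=Y'$ and $1$'s deviation is rejected.

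Finally I would record why the count strictly drops. College $A$ is full in both allocations, so its headcount is unchanged; the passage from $Y^{\text{\SPDA}}$ to $Y'$ merely swaps $w$ for the higher-ranked $p$ at $A$. But $p$ was productively enrolled at $B$, so $B$ empties while $w$ has nowhere to go. The essential ingredient is thus college $B$: it is the outside enrollment for $p$ that the market-power allocation at $A$ renders idle, turning what would otherwise be a mere swap into a net loss of one assigned student. Together with \cref{DifferenrNumber}, this establishes all three assertions.
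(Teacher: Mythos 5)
Your proposal is correct and follows essentially the same route as the paper: \cref{DifferenrNumber} handles the disagreement and ``more students'' claims, and your two-college construction is structurally identical to the paper's \cref{DifferenrNumberCont} (your $1,p,w,A,B$ map onto the paper's $r,p,g,h,c$, with only cosmetic differences in which contracts are acceptable), where $\{(1,A,0),(p,A,1)\}$ plays the role of the paper's $\{(r,h,0),(p,h,1)\}$. Your explicit stability verification of $Y'$ via the contiguity of cohort blocks is a welcome addition, since the paper asserts the stability of its alternative allocation without detail.
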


\cref{DifferenrNumber} has already shown a non-merit-based stable allocation that assigns more students to college than the unique merit-based stable allocation. In the merit-based stable allocation, the higher-ranked (rich) student receives funding even though her preferences are not sensitive to financial terms while the lower-ranked (poor) student, who is sensitive to financial terms, remains unassigned. In the non-merit-based stable allocation, the college applies its local market power over the rich student; i.e., the college refuses to allocate state funding to the rich student who is not sensitive to financial terms.

Building on this intuition, we next present a stable algorithm designed to increase the number of admitted students compared to \SRDA. The algorithm relies on the application of local market power over students who are not sensitive to financial terms. We say that a student, $s$, is \textit{not sensitive to financial terms in college $c$} if she considers contracts with $c$ to be nearly perfect substitutes. Formally, for any $c'\in C$ and any $t\in T$, $(s,c,t)\succ_s (s,c',t) \iff (s,c,1-t)\succ_s (s,c',t)$ and $(s,c,1-t)\succ_s \emptyset \iff (s,c,t)\succ_s \emptyset$. A student is \textit{not sensitive to financial terms} if she is not sensitive to financial terms in any college.

\paragraph{Preference Flip Algorithm} Initialize a set $A^{\prime}\subseteq A$, where $A$ is the set of student--college pairs such that the student is not sensitive to financial terms in the college. For each pair in $A^{\prime}$, flip the order of the contracts with the college in the student's original ROL, so that the state-funded contract appears immediately \textit{after} the self-funded contract, and run \SPDA on the resulting problem. If the resulting allocation is certainly stable (with respect to the original preferences), stop and output this allocation. Otherwise, remove some pairs from $A^{\prime}$ and repeat the process.

The description of the preference flip algorithm intentionally leaves two degrees of freedom: the choice of $A^{\prime}$ and the choice of elements to remove from this set. In the empirical application, we implement the preference flip algorithm by initializing $A'=A$ and by removing pairs corresponding to the highest-ranked student in some college such that the pair is in $A'$ and the state-funded contract between them is part of a potentially blocking allocation as described in the definition of certainly stable allocations (or a random pair, if no such pair exists). We note that with this choice the preference flip algorithm is computationally efficient. 

\begin{proposition} \label{prop:algo}
For any initial set $A'\subseteq A$ and any rule regarding the choice of elements to remove from $A'$, the preference flip algorithm results in a stable allocation. Furthermore, the resulting allocation may not allocate funding based on merit. 
\end{proposition}

In the main analysis we use the preference flip algorithm, which only applies local market power over students who are not sensitive to financial terms. These students could plausibly be identified using hard information. Supplemental~\cref{app:narrow_and_broad} explores algorithms that apply local market power more broadly. These algorithms use the observation that a college has market power over students whose best feasible alternative to the funded contract with the college is attending the college without state funding, even if they do not rank both contracts with the college consecutively. Identifying these students in practice would pose a bigger challenge.

In exploring the set of stable allocations, a natural objective is to identify a stable allocation that assigns the largest number of students. We conclude this section by showing that this objective may not be feasible with current computation technology when the number of students and colleges is large.

\begin{theorem}\label{prop:np}
Finding a maximum-size stable (or certainly stable) allocation in Hungarian college admissions markets is NP-hard.
\end{theorem}

\cref{prop:np} determines that finding a maximum-size stable allocation is not computationally feasible. In Supplemental~\cref{app:proofs:online}, we show that it is not even possible to guarantee finding a stable allocation that achieves at least 94\% of the size of the maximum-size stable allocation. Since our time-efficient algorithm is not optimal, the difference (in cardinality) between its result and the allocation that results from \SRDA is a lower bound on the maximal possible difference. Indeed, \cref{app:tab:narrow_mp} presents the results of other stable algorithms which apply market power more broadly and achieve a larger increase in the number of assigned students in our empirical application.

\section{Large Hungarian College Admissions Markets}\label{sec:large_market}
We have shown that Hungarian college admissions markets may admit stable allocations where funding is not allocated based on merit, and that the rural hospital theorem and the consensus property do not extend to this domain. In this section, we ask what is required for these results to hold in large matching markets.

We find that when students' sensitivity to financial terms is heterogeneous (i.e., some students consider both contracts with the same college to be nearly perfect substitutes while others do not), the preference flip algorithm assigns more students to college relative to the outcome of DA (\cref{prop:big}). By contrast, no stable allocation increases the number of students assigned to college when all students consider both contracts with the same college to be nearly perfect substitutes (\cref{prop:insens}) or when all students do not regard different contracts with the same college as close substitutes (\cref{th:ind_alg}). 



\subsection{A Model of Large Hungarian College Admissions Markets}

\paragraph{Random Hungarian college admissions markets} A \textit{random Hungarian college admissions market} is a tuple  $\tilde{{\Gamma}}=\left< S,C, \left\{\succ_c,\gg_c,q^0_c,q^1_c \right\}_{c\in C}, k, \textit{mode} \right>$. It includes the sets of students and colleges, as well as colleges' preferences, rankings, and quotas. A random Hungarian college admissions market induces a Hungarian college admissions market by drawing students' preferences randomly. The randomization is governed by the integer $k$, which determines the number of contracts to be considered, and the parameter \textit{mode}, which determines the mode of randomization.\footnote{We build on the model of \citet{kp2009} that builds on the earlier work of \citet{im2005x,im2015} for large one-to-one matching markets (without contracts). Since we require colleges to have capacities larger than one we follow the notation and conditions of \citet{kp2009}. \citet{hassidim2017need} have shown that similar results hold for other large market models.} 

\paragraph{Modes of randomization} We focus on two main modes of randomization, \textit{independent} and \textit{substitutes}. 
Under the  \textit{independent} mode of randomization, each student's preferences are randomly drawn independently and uniformly from the set of all permutations over $2k$ contracts (other contracts are not acceptable). Under this mode of randomization, a typical student is extremely sensitive to financial terms (i.e., finds at most one set of financial terms acceptable in each college).

Under the \textit{substitutes} mode of randomization, preferences are drawn as follows. First, for each student, $k$ acceptable colleges are drawn uniformly at random. Second, a permutation over the $2k$ contracts with these colleges is drawn independently and uniformly at random. This mode of randomization leads to substantial heterogeneity in students' sensitivity towards financial terms (i.e., some students rank the state-funded contract and the self-funded contract with the same college consecutively, while others do not). 


%


We note that under the \textit{substitutes} mode of randomization, there are no students who are extremely sensitive to financial terms (i.e., rank state-funded contract exclusively). The presence of such students would greatly simplify the proof of \Cref{prop:big} and improve the bounds therein.\footnote{\citet{rheinganslarge} proposes other ways to improve our bounds.} 

All of our results (for both modes of randomization) continue to hold when all (or most) students prefer state-funded seats to self-funded seats in the same program and when students' preferences are drawn from heterogeneous distributions (e.g., some students prefer economics majors while others prefer physics).

\paragraph{Regular sequences of markets} A sequence of random Hungarian college admissions markets, $\left\{\tilde{\Gamma}^n\right\}_{n=1}^{\infty}$ is \textit{regular} if there exist integers $k,\bar{q}$ and $\lambda$, all greater than one, and a mode of randomization $mode$, such that for each $\Gamma^n=\left<  S^n,C^n,\left\{\succ_c,\gg_c,q^0_c,q^1_c \right\}_{c\in C^n}, k^n, \textit{mode}^n \right>$: 

\begin{enumerate}
    \item $\left|C^n\right|=n$ for all $n$,
    \item $k^n=k$ and $\textit{mode}^n=\textit{mode}$ for all $n > 2k$,
    \item $0<q^t_c\leq\bar{q}$ for all $n$, $c\in C^n$ and $t\in T$, 
    \item $s\gg_c\emptyset$ for all $n$, $c\in C^n$, and $s\in S^n$,
    \item $\frac{1}{\lambda} n\leq \left|S^n\right|\leq \lambda n$ for all $n$.
\end{enumerate}

Condition~1 assures that the number of colleges grows as the sequence progresses. Condition~2 assures that the  distribution of preferences is governed by the same mode of randomization and that the number of contracts that students consider acceptable is uniformly bounded on the sequence. Condition~3 assures that the number of positions of each type is positive and uniformly bounded across colleges and markets. Condition~4 assures that colleges find any student acceptable. 
Condition~5 assures that
the number of students does not grow much faster or much slower than the
number of colleges. \citet{kp2009} require only the first half of this condition. We require the second half as well since we are interested
in instances where a substantial fraction of colleges have multiple stable allocations. In markets with a small number of students (who each find contracts with at most $k$ colleges acceptable) most colleges will not even be a party to any individually rational allocation. The only other difference from the conditions of \citet{kp2009} is the requirement that colleges offer both types of seats (Condition~3). Had each college only offered one type of seat, both the {\it independent} and the {\it substitutes} modes of randomization would induce regular (uniform) \citeauthor{kp2009} sequences of markets.\footnote{These sequences would also meet \citeauthor{kp2009}'s thickness condition.} All of our results continue to hold if the fraction of colleges that offer multiple contracts is bounded away from zero.

\subsection{Stable Allocations in Large Markets}

\begin{theorem} \label{prop:big}
For any regular sequence of markets with the \textit{substitutes} mode of randomization, there exists $\Delta>0$, such that comparing the stable allocation $\tilde{Y}'_n$ to the outcome of \SPDA:
\begin{enumerate}
	\item The expected number of assigned students is larger. Formally, $$\underset{n\rightarrow \infty}{\liminf}\frac{\E\left[\left|\tilde{Y}'_n\right|\right]}{\E\left[\left|\tilde{Y}^{\SPDA}_n\right|\right]}>1.$$
	\item The expected fraction of newly assigned students, funding losers, and upward transfers are all bounded below by $\Delta$.
\end{enumerate}
Finally, given access to students' preferences, and colleges' quotas and rankings of students, $\tilde{Y}'_n$ can be computed time-efficiently (i.e., in polynomial-time) using the preference flip algorithm.
\end{theorem}

\medskip

\begin{proposition} \label{prop:insens}
If all students in a Hungarian college admissions market are not sensitive to financial terms, then the same students are assigned across all stable allocations, and each college is assigned the same number of students across all stable allocations. 
\end{proposition}
 
We note that \cref{prop:insens} is not probabilistic and does not rely on any of the large market assumptions.  
 
\begin{proposition} \label{th:ind_alg}
Let $\{\tilde{\Gamma}^n\}_{n=1}^{\infty}$ be a regular sequence of markets such that the mode of randomization is $\textit{independent}$. Then, the expected fraction of colleges  (students) that receive the same assignment in all stable allocations approaches 1 as the  number of colleges, $n$, approaches infinity.
Furthermore, with high probability, the fraction of colleges (students) that receive the same assignment in all stable allocations is arbitrarily close to 1 in sufficiently large markets.
\end{proposition}

\cref{th:ind_alg} states that the set of stable allocations is small when students do not regard different contracts with the same college as close substitutes. This result generalizes to a broad class of modes of randomization that preserve this property (see Supplemental~\Cref{app:proofs:online}). These include modes of randomization where all (or most) students prefer state-funded seats to self-funded seats in the same program as well as modes where students' preferences are drawn from heterogeneous distributions (e.g., some students prefer economics majors while others prefer physics). In our empirical application, a large fraction of applicants rank the state-funded and self-funded contracts with the same college (see \cref{sec:results}). This pattern is not consistent with any of these modes of randomization.

We conclude by noting that our model preserves the property of the \citeauthor{kp2009} model that the outcomes of \SPDA and \SRDA are similar.

\begin{proposition} \label{prop:sub_vs_star1}
Let $\{\tilde{\Gamma}^n\}_{n=1}^{\infty}$ be a regular sequence of markets such that the  mode of randomization is $independent$ or $substitutes$. Then, the expected fraction of colleges (students) that receive a different assignment under $\tilde{Y}^{\SPDA}$ and $\tilde{Y}^{\SRDA}$ approaches zero as the number of colleges, $n$, approaches infinity.
\end{proposition}

\section{Empirical Findings}\label{sec:results}
In this section, we evaluate our theoretical predictions using data from Hungarian college admissions. We document heterogeneity in applicants' sensitivity to financial terms. Consistent with our theory, we find that the preference flip algorithm increases the number of applicants admitted to college relative to the outcome of \SRDA and that \SPDA and \SRDA are essentially identical.\footnote{\Cref{app:ipmm} presents a similar analysis using data from the Israeli Psychology Master's Match, where an overwhelming majority of applicants are not sensitive to financial terms. The preference flip algorithm does not increase the number of admitted applicants.} Additionally, the preference flip algorithm benefits low socioeconomic status applicants and increases geographic mobility, especially to the periphery.

\subsection{Data}\label{sec:data}
Our main data source is an administrative dataset that contains information on the bachelor's degree admissions process in Hungary. In particular, we observe each applicant's ROL, the priority score for each contract in the ROL, and all the information required to (re)calculate these priority scores.\footnote{Our data report up to 7 contracts from each ROL: the first 6 contracts and the contract to which the applicant is assigned. The dataset also reports the number of contracts in each ROL. We observe the complete ROL for 91.7 percent of applicants and for 93.5 percent of all ranked contracts.} This information includes grades in various subjects in the final two years of high school (11th and 12th grades), performance in the matriculation exams, and the number of points the applicant received for claiming a disadvantaged background. For each applicant we also observe gender, postal code, and a high-school identifier. For each alternative (program--terms pair) the data includes its realized priority-score cutoff (i.e., the minimal score that was required to gain admission to this alternative).
We complement our data with hand-collected information on the program-specific capacities from the 2007 information booklet.

The second data source is the T-STAR dataset of the Hungarian Central Statistics Office. We use it to obtain settlement-level annual information on collected income taxes. In particular, we calculate the per-capita gross annual income for all 3,164 settlements in 2007 and merge it with our main administrative data based on settlement identifiers.

The third data source is the National Assessment of Basic Competencies (NABC). Following \citet{horn2013diverging}, we create an NABC-based SES index, which is a standardized measure that utilizes survey information of the NABC. The NABC-based SES index resembles the economic, social, and cultural status (ESCS) indicator of the OECD PISA survey. It combines three subindices: an index of parental education, an index of home possessions (number of bedrooms, mobile phones, cars, computers, books, etc.), and an index of parents' labor-market status. Since this data is available staring in 2008, we calculate the average value of students' NABC-based SES index for each high school between 2008 and 2012, and merge this with our main administrative dataset using high-school identifiers.\footnote{The high-school-specific average NABC-based SES index is stable over time. Between 2008 and 2012, 95\% of the variation in the high-school-specific averages of the NABC-based SES index is explained by high-school fixed effects. Furthermore, each year, about one-third of the student-level variation in the NABC-based SES index is explained by high-school fixed effects.}

\Cref{app:tab:summary_stat_applicants} summarizes the means and standard deviations of the background characteristics of applicants.

To set the stage for our main analysis, we first calculate the outcome of \SRDA. The \SRDA assignment replicates the realized assignment for all but 3.2 percent of the applicants (3,528 applicants).\footnote{There are small gaps and issues that do not allow us to fully replicate the allocation. First, we only observe up to 7 contracts in each ROL (corresponding to the top 6 choices and the realized allocation). Second, we do not know the results of auditions and other specialized admissions exams that are held by a small number of programs (e.g., art programs). Third, the mechanism has rules regarding cases where there are multiple (tied) marginal applicants which are not observable to us \citep{biro2008student,biro2015college}. We address this by breaking priority ties with a single lottery. Finally, the data contain some inconsistencies, for example, applicants with a priority score of zero assigned to selective programs. In our main analysis, we address these inconsistencies by fixing the assignment of these applicants to the realized assignment, and make the corresponding seats unavailable to others. \Cref{app:section:empirical_findings} presents the results from other approaches, and establishes the robustness of our findings.} We call the resulting allocation the benchmark. \cref{app:tab:benchmark} shows that the benchmark allocation is larger by 2,567 assigned applicants, and that the difference is the result of increased utilization of self-funded seats. Intuitively, the higher utilization of self-funded seats reduces the scope for increasing the size of the stable allocation using the preference flip algorithm.

\subsection{The Distribution of Applicants' Rank-Order Lists}

\Cref{tab:summary_stat_rols} presents summary statistics on the characteristics of applicants' ROLs for all applicants, and for disadvantaged and non-disadvantaged applicants separately. Panel A presents statistics related to applicants' sensitivity towards funding. Panel B focuses on program characteristics, such as geographic location and field of study.


Panel A shows that 51.4 percent of applicants were extremely sensitive to financial terms: they ranked state-funded contracts exclusively. Other applicants were less sensitive to financial terms, including 11.6 percent of the applicants who ranked the same program with state-funding and self-funding consecutively on the top of their ROL.

Panel B shows that 49.4 percent of applicants rank exclusively contracts that are in the same settlement, 35 percent of applicants rank exclusively contracts at a single university, and 26.3 percent of applicants rank exclusively contracts at a single faculty of a university. Panel B also shows that 54.9 percent of applicants rank exclusively contracts in a single field of study, and 29.6 percent of applicants rank exclusively contracts in a single major. 

Disadvantaged and non-disadvantaged applicants differ in terms of their sensitivity to financial terms and other program characteristics. Non-disadvantaged applicants were nearly twice as likely to rank the same program with state-funding and self-funding consecutively, and 29.4 percentage points less likely to rank state-funded contracts exclusively compared to disadvantaged applicants. Additionally, disadvantaged applicants were less likely to express preference for program characteristics.



Overall, these patterns suggest that disadvantaged and non-disadvantaged applicants trade off program characteristics (e.g., major or location) and financial terms differently. Disadvantaged applicants are more sensitive to financial terms, while non-disadvantaged applicants are more willing to pay for certain program characteristics. Since market power can only be exercised over applicants who rank two contracts in the same program, these patterns suggest that the direct effect of the preference flip algorithm would likely benefit disadvantaged applicants.

\begin{table}[h!]
\centering\footnotesize
	\captionsetup{justification=centering}
\caption{Characteristics of applicants' ROLs}
\vspace{-1.5em}
\label{tab:summary_stat_rols}
\begin{tabular}{l S[table-number-alignment = center-decimal-marker] S[table-number-alignment = center-decimal-marker] S[table-number-alignment = center-decimal-marker] S[table-number-alignment = center-decimal-marker]} \\ \hline\hline
 & \multicolumn{1}{c}{{\parbox{2.1cm}{\centering All applicants (\%)}}} & \multicolumn{1}{c}{{\parbox{2.1cm}{\centering Non-disadvantaged (\%)}}} & \multicolumn{1}{c}{{\parbox{2.1cm}{\centering Disadvantaged (\%)}}} & \multicolumn{1}{c}{{\parbox{2.1cm}{\centering p-value ((2)=(3))}}} \\ 
 & \multicolumn{1}{c}{(1)} & \multicolumn{1}{c}{(2)} & \multicolumn{1}{c}{(3)} & \multicolumn{1}{c}{(4)} \\ \hline
 \multicolumn{5}{l}{A. {\it Preference for funding}} \\ 
 {\parbox{6.5cm}{State-funded contract exclusively}} & 51.4 & 50.0 & 79.4 &  0.00 \\ 
 {\parbox{6.5cm}{Self-funded contract exclusively}} & 18.5 & 19.2 &  2.0 &  0.00 \\ 
 {\parbox{6.5cm}{State- and self-funded contracts}} & 30.1 & 30.8 &  18.6 &  0.00 \\  
 {\parbox{6.5cm}{Same study program consecutively}} & 15.5 & 15.8 &  8.3 &  0.00 \\ 
 {\parbox{6.5cm}{Same study program consecutively on the top of the ROL}} & 11.6 & 11.9 &  6.2 &  0.00 \\ 
 \multicolumn{5}{c}{} \\ 
 \multicolumn{5}{l}{B. {\it Preference for program characteristics}} \\ 
 {\parbox{6.5cm}{Single program location}} & 49.4 & 50.0 & 36.6 &  0.00 \\ 
 {\parbox{6.5cm}{Single university}} & 35.0 & 35.4 & 27.4 &  0.00 \\ 
 {\parbox{6.5cm}{Single faculty}} & 26.3 & 26.9 & 15.3 &  0.00 \\ 
 {\parbox{6.5cm}{Single field of study}} & 54.9 & 55.6 & 40.1 &  0.00 \\ 
 {\parbox{6.5cm}{Single major}} & 29.6 & 30.3 & 15.5 &  0.00 \\ 
\hline \multicolumn{1}{l}{\# of applicants} & \multicolumn{1}{c}{     108,854} & \multicolumn{1}{c}{     103,840} & \multicolumn{1}{c}{       5,014} & \\ 
\hline\hline \multicolumn{5}{p{17cm}}{{\it Notes}: The table reports summary statistics on applicants' ROL. Panel A focuses on preference for funding. Specifically, Panel A shows the share of applicants who rank state-funded (self-funded) contracts exclusively, the share of students who both rank state-funded and self-funded contracts, and the share of applicants who rank the same study program with state-funding and self-funding consecutively, and who rank the same study program with state-funding and self-funding consecutively on the top of the ROL. Panel B focuses on preference for program characteristics. Specifically, Panel B shows the share of applicants who rank exclusively contracts that are in the same settlement (single program location), at a single university (single university), and at a single faculty of a university (single university), and the share of applicants who rank exclusively contracts in a single field of study, and in a single major. Column (1) presents these shares for all applicants, and columns (2) and (3) report these shares for non-disadvantaged and disadvantaged applicants, respectively. We test whether these shares differ between disadvantaged and non-disadvantaged applicants. Column (4) reports the corresponding p-values.}\end{tabular}
 
\end{table}

\subsection{The Size of the Set of Stable Allocations}
This section compares the stable allocations resulting from \SRDA, \SPDA, and the preference flip algorithm along several dimensions. We show that \SRDA and \SPDA yield essentially identical allocations, while the preference flip algorithm leads to meaningful changes: it changes the assignment of approximately 8 percent of college applicants benefiting low-SES applicants disproportionately, increases the number of applicants admitted to college by about 2 percent, and promotes geographic mobility relative to \SRDA and \SPDA. 

\subsubsection{Winners, Losers, and the Number of Assigned Applicants}

\Cref{tab:core_size} compares the outcome of the preference flip algorithm to the \SRDA benchmark. Consistent with \Cref{prop:big}, Panel A shows that the preference flip algorithm increases the number of assigned applicants by 1.9 percent (1,558 applicants). The increase is concentrated in self-funded, full-time seats. 

In the absence of the consensus property, it is interesting to understand how the preference flip algorithm affects different groups of students. Panel B provides statistics on the set of winners from the preference flip algorithm. We find that the preference flip algorithm generates 4,798 winners relative to \SRDA. Approximately 44 percent of the winners were not assigned under the benchmark (i.e., are newly assigned by the preference flip algorithm) and a similar number are assigned to a different program from that of their benchmark assignment. The rest of the applicants are assigned to the same program under preferable financial terms. 

Panel C provides statistics on the set of losers from the preference flip algorithm. We find that the preference flip algorithm generates 4,260 losers relative to \SRDA. The majority of losers (more than 70 percent) are assigned to the same program under less favorable financial terms. The rest of the losers are assigned to another program under a contract they ranked lower or become unassigned. 		  
Our theoretical analysis shows that the impact of the preference flip algorithm on redistribution is complex (\cref{prop:algo}).
The direct effect of applying local market power is that some applicants lose their state funding and other applicants win by gaining admission to the program with state funding (instead of being unassigned or being assigned to a less desirable alternative). But, if the program already filled its capacity for self-funded applicants, this means that some applicants will lose their assignment to the unfunded contract. Additionally, displaced applicants (either winners or losers) can lead to chains of reassignments. This complexity is reflected in Panels B and C. Our empirical evaluation shows that, in spite of the theoretical possibility, a relatively small number of losers become unassigned or assigned to a new program.

We also evaluate what a single college can achieve by applying local market power. To this end, we focus on programs that can increase the size of their cohort (have unfilled self-funded capacity) and run a version of preference flip algorithm that allows only this program to apply local market power (by initializing $A'=A\cap\left(S\times\{c\}\times T\right)$). We find that, on average, the number of applicants assigned to this program increases by 3.6. This number corresponds to a total increase of 1,741 applicants (compared to 1,558 when $A' = A$).

\begin{table}[h!]
\centering\footnotesize
\captionsetup{justification=centering}
\caption{Winners, losers, and the number of assigned applicants}
\vspace{-1.5em}
\label{tab:core_size}
\begin{tabular}{l S[table-number-alignment = center, table-format = 5.0] S[table-number-alignment = center, table-format = 5.0] S[table-number-alignment = center, table-format = 5.0]} \\ \hline\hline
 & \multicolumn{1}{c}{Benchmark} & & \multicolumn{1}{c}{Preference flip } \\
 & \multicolumn{1}{c}{SR-DA} & \multicolumn{1}{c}{SP-DA} & \multicolumn{1}{c}{algorithm} \\ 
 & \multicolumn{1}{c}{(1)} & \multicolumn{1}{c}{(2)} & \multicolumn{1}{c}{(3)}  \\ \hline
\multicolumn{4}{l}{A. {\it Number of assigned applicants}}\\ 
Assigned to a contract &        84130 &        84130 &                85688 \\ 
Assigned to a state-funded contract &        48725 &        48725 &            48709  \\ 
Assigned to a self-funded contract &        35405 &        35405 &            36979  \\ 
Assigned to a self-funded, full-time contract &        13321 &        13321 &              14418   \\ 
\multicolumn{4}{l}{}\\ 
B. {\it Winners} & &            8&                 4798  \\ 
 -- Newly assigned &  &            0&                2121   \\ 
 -- New program & &            8 &                 2282   \\ 
 -- Same program, preferred financial terms & &            0 &                395   \\ 
\multicolumn{4}{l}{}\\ 
C. {\it Losers} &  &            0 &                4260    \\ 
 -- Newly unassigned & &            0 &             563   \\ 
 -- New program & &            0 &                674   \\ 
 -- Same program, less preferred financial terms &   &            0 &          3023  \\ \hline\hline
\multicolumn{4}{p{15.5cm}}{{\it Notes}: The table presents the number of assigned applicants under various stable assignments (Panel A). Panels B and C describe the number of winners and losers from changing the benchmark \SRDA to \SPDA and to our alternate algorithms. An applicant is a winner if she is assigned to a contract she ranked higher than her contract in the benchmark assignment. An applicant is a loser if she is assigned to a contract she ranked lower than her assigned contract in the benchmark assignment or if she becomes unassigned. Column (1) shows the benchmark assignment. Column (2) presents the \SPDA assignment. Column (3) presents the preference flip algorithm assignment.} \\ 
\end{tabular}

\end{table}

Column (2) of \cref{tab:core_size} presents the results of changing \SRDA to \SPDA. As predicted by \cref{prop:rural}, \SPDA and \SRDA assign the same number of applicants. Furthermore, changing the assignment mechanism from \SRDA to \SPDA only changes the allocation of 8 applicants (approximately 0.01\%). Consistent with \cref{prop:consensus}, all 8 applicants prefer their \SPDA assignment.

According to \cref{prop:dgs}, only the 8 applicants who strictly prefer the outcome of \SPDA to \SRDA have an incentive to misrepresent their preferences under \SRDA. This finding supports our reliance on reported preferences in our interpretation of winners and losers.\footnote{To verify that applicants had limited incentives to misrepresent their preferences, we also conducted a simulation study. We added to each applicants' ROL all contracts with 6 programs in fields that appear on the original ROL. We then drew a random ROL consisting of all these contracts uniformly at random and ran \SPDA and \SRDA. We repeated this process 100 times. On average, 7.1 applicants (out of the 108,854) could manipulate \SRDA. We repeated this analysis requiring that the funded contract in each program is ranked higher than the unfunded contract with the same program in all ROLs. On average, 7.4 applicants could manipulate \SRDA.}

\subsubsection{Characteristics of Winners and Losers}

\Cref{tab:student_comparison_h2} compares the characteristics of winners and losers for the preference flip algorithm. On average, winners' SES is lower than losers'. The share of disadvantaged applicants among winners is 6.0 percent relative to 2.5 percent among losers. Additionally, the average per-capita income in winners' settlements of residence is lower, the average NABC-based SES index in their high schools is higher, and they are less likely to live in the capital or a county capital.

\Cref{tab:student_comparison_h2} also reveals that, on average, winners have lower academic achievement than losers. The average 11th-grade GPA among winners is 3.5, relative to 3.7 among losers. The difference corresponds to 0.26 (=(3.690-3.475)/0.84) standard deviations of the distribution of college applicants' 11th-grade GPA. In summary, applicants who prefer the assignment resulting from the preference flip algorithm to the merit-based benchmark are lower-achieving low-SES applicants. 


\begin{table}[htpb!]
\centering\footnotesize
\captionsetup{justification=centering}
\caption{Characteristics of winners and losers: preference flip algorithm}
\vspace{-1.5em}
\label{tab:student_comparison_h2}
\begin{tabular}{l S[table-number-alignment = center-decimal-marker] S[table-number-alignment = center-decimal-marker] S[table-number-alignment = center-decimal-marker]} \\ \hline\hline
 & \multicolumn{1}{c}{Winners} & \multicolumn{1}{c}{Losers} & \multicolumn{1}{c}{p-values: (1)=(2)} \\ 
 & \multicolumn{1}{c}{(1)} & \multicolumn{1}{c}{(2)} & \multicolumn{1}{c}{(3)}  \\ \hline
 Disadvantaged & 0.060 &  0.025 & 0.000 \\ 
 Per-capita annual gross income (1000 USD, 2007 prices) & 9.719 &  10.399 & 0.000 \\ 
 NABC-based SES index & 0.174 &  0.306 & 0.000 \\ 
 Capital & 0.184 &  0.280 & 0.000 \\ 
 County capital & 0.190 &  0.216 & 0.003 \\ 
 Town & 0.330 &  0.293 & 0.000 \\ 
 Village & 0.297 &  0.212 & 0.000 \\ 
 11th-grade GPA (1--5) & 3.475 &  3.690 & 0.000 \\ 
 Female & 0.571 &  0.552 & 0.076 \\ 
\hline Number of applicants & \multicolumn{1}{c}{       4,798} & \multicolumn{1}{c}{       4,260} & \multicolumn{1}{c}{       9,058} \\ \hline\hline 
\multicolumn{4}{p{15.5cm}}{{\it Notes}: The table reports the mean values of characteristics of winners and losers from changing the benchmark \SRDA to the preference flip algorithm. An applicant is a winner if she is assigned to a contract she ranked higher than her benchmark assignment. An applicant is a loser if she is assigned to a contract she ranked lower than her assigned contract in the benchmark assignment or if she becomes unassigned. Column (3) presents p-values for the equality of mean characteristics of winners and losers.}
\end{tabular}

\end{table}

\subsubsection{Geographic Mobility}



\Cref{tab:mobility2} studies geographic mobility. Geographic mobility is affected through winners who get newly assigned or get assigned to a new program at another location, and through the smaller number of losers who get unassigned or get assigned to a new program at another location. We compare the allocation resulting from the preference flip algorithm to the benchmark across three dimensions of geographic mobility. First, columns (1) and (2) focus on \textit{movers}: applicants whose assigned study program is not located in the county where they reside. Second, columns (3) and (4) focus on applicants who get assigned to a study program in the periphery. Third, columns (5) and (6) focus on the subset of movers who move to the capital, that is, applicants who get assigned to a study program in the capital but do not reside there. Since low-SES applicants are more likely to reside outside the capital, and prestigious study programs are located in the capital, this latter dimension of geographic mobility may be interpreted as a measure of social mobility \citep{Narita}.

Columns (1) and (2) show that the preference flip algorithm increases the number of movers by 1,094 (from 51,221 to 52,315, see Panel A). This increase is largely explained by the increase in the number of newly assigned applicants to a location outside their county of residence (1,330 newly assigned movers, see column (2) of Panel B). The difference is the result of 311 movers who become unassigned, an increase of 48 in the number of movers among winners who are assigned to a new program (=1,494--1,446), and an increase of 27 in the number of movers among losers who are assigned to a new program (=403--376).

Columns (3) and (4) show that the number of applicants assigned to the periphery increases by 1,110 (from 47,845 to 48,955, see Panel A). The increase in the number of applicants assigned to the periphery accounts for 71 percent of the increase in the number of assigned applicants. This finding is in line with the fact that most seats in the capital are filled in the benchmark assignment (see Appendix Table~\ref{tab:excess_capacity}). Among winners, the number of applicants assigned to the periphery increases by 1,269 (=1,460+1,211--1,402), and, among losers, the number of applicants assigned to the periphery decreases by 159 (=339--233--265). Thus, unlike rural hospitals, peripheral colleges can improve the utilization of available seats.

Columns (5) and (6) show that the number of applicants who move to the capital increases by 381 (from 22,444 to 22,825, see Panel A). Thus, applicants from the periphery account for 85 percent (=381/448) of the increase in the number of applicants assigned to the capital. Among winners, the number of applicants who move to the capital increases by 590 (=435+694--539), and, among losers, the number of applicants who move to the capital decreases by 209 (=167--167--209).

\begin{table}[htpb!]
\captionsetup{justification=centering}
\centering\footnotesize
\caption{Geographic mobility: Preference flip algorithm}
\vspace{-1.5em}
\label{tab:mobility2}
\begin{tabular}{l c c c c c c c c} \\ \hline\hline
 & \multicolumn{2}{c}{Mover} & & \multicolumn{2}{c}{Assigned to periphery} & & \multicolumn{2}{c}{Moved to capital} \\ \cline{2-3} \cline{5-6} \cline{8-9}
& \multicolumn{1}{c}{\SRDA} & \multicolumn{1}{c}{PFA} & & \multicolumn{1}{c}{\SRDA} & \multicolumn{1}{c}{PFA} & & \multicolumn{1}{c}{\SRDA} & \multicolumn{1}{c}{PFA} \\ 
 & \multicolumn{1}{c}{(1)} & \multicolumn{1}{c}{(2)} & & \multicolumn{1}{c}{(3)} & \multicolumn{1}{c}{(4)} & & \multicolumn{1}{c}{(5)} & \multicolumn{1}{c}{(6)}\\ \hline
\multicolumn{9}{l}{A. {\it Total}} \\ 
All assigned applicants &       51,221 &       52,315 & &       47,845 &       48,955 & &       22,444 &       22,825 \\ 
\multicolumn{9}{l}{} \\ 
\multicolumn{9}{l}{B. {\it Winners}} \\
Newly assigned (N=       2,121) & -- &        1,330 & & -- &        1,460 & & -- &          435 \\
New program (N=       2,282) &        1,446 &        1,494 & &        1,402 &        1,211 & &          539 &          694 \\
\multicolumn{9}{l}{} \\
\multicolumn{9}{l}{C. {\it Losers}} \\
Newly unassigned (N=         563) &          311 & -- & &          233 & -- & &          167 & -- \\
New program (N=         674) &          376 &          403 & &          265 &          339 & &          209 &          167 \\ \hline\hline 
\multicolumn{9}{p{16cm}}{{\it Notes}: The table presents the number of movers (i.e., applicants whose assigned study program is not located in the county where they reside), of applicants assigned to the periphery (i.e., not to the capital), and of applicants who moved to the capital (i.e., applicants who get assigned to a study program in the capital but do not reside there) under the preference flip algorithm (PFA) and in the benchmark \SRDA assignments. Panel A presents the total number of applicants with these characteristics. Panels B and C focus on winners and losers who are assigned to a different study program.}\end{tabular}

\end{table}

\section{Discussion}\label{sec:discussion}

We study stable allocations in college admissions markets where students can attend the same college under different financial terms. We show that the deferred acceptance algorithm identifies a stable allocation where funding is allocated based on merit and that the set of {\it merit-based} stable allocations is small. By contrast, when students are heterogeneous in the way they trade off program characteristics and contractual terms, the set of stable allocations is large and different stable allocations differ in the number of students assigned to college.

While our focus was on financial aid, our results generalize to setting with other contractual terms. In the context of college admissions, such terms include specialized study tracks (e.g., a business school offering management and accounting tracks) or access to dormitories. Beyond college admissions, financial terms may correspond to different service terms in the military \citep{sonmez2013} or different levels of compensation \citep{niederle2007}.

Our findings raise several important questions. First, \textit{how does one implement alternative stable allocations?} 
This is challenging, because non-merit-based stable mechanisms give academically strong applicants an incentive to overstate their sensitivity to financial terms. One approach for addressing this challenge is to use hard information. For example, governments may use existing administrative data or facilitate the collection of information by colleges through forms like the Free Application for Federal Student Aid (FAFSA). American college applicants who complete this form that determines their eligibility for federal financial aid can choose to forward this information to particular colleges. Similarly, enrollment management and personalized financial aid packages are widespread in the U.S. college admissions market, and private companies are offering financial aid optimization tools using large proprietary datasets \citep{lieber2021price}. 

Another interesting question is \textit{how to compute a stable allocation that supports the objectives of a policymaker.} Policymakers may have different objectives (e.g., maximize the number of students assigned, diversity, etc.). We have shown that---at least for the objective of maximizing the number of assigned  students---this computation may be hard. Still, in college admissions markets, a long waiting period until the release of official result is common, and this may allow solving realistic-size problems in practice. Leveraging our characterization of stability (\cref{lem:stable_new}), integer programming methods in the spirit of \citet{agoston2016integer} and \citet{delorme2019mathematical} are a promising direction.

%
%
%
\singlespacing

\bibliography{references}

\clearpage
\begin{appendices}

\section{Proofs}\label{app:proofs}

\begin{proof}[Proof of \cref{lem:stable_new}]

If $Y$ violates one of Conditions~1--4, then it is not stable. For the other direction, assume $Y$ is not stable. We show that if $Y$ violates none of Conditions~1,2, or 4, then if must violate Condition~3. 

Since $Y$ is individually rational (by Condition~1), the allocation is blocked through some $Z \ne \emptyset$. 
By Condition~4, $Z'$ contains a contract $(\bar{s},c,\bar{t})$ such that $(\bar{s},c,1-\bar{t})\notin Y$.
Furthermore, $Y$ is also blocked through $Z' = \Ch_c(Y \cup Z) \setminus Y$, which contains $(\bar{s},c,\bar{t})$. 

Let $(s,c,t)\in Z'$  be a contract with the highest ranked student according to $\gg_c$ (formally,  $\hat{s}\gg_c s \implies \left(\{\hat{s}\}\times\{c\}\times T\right)\cap Z'=\emptyset$).
By Condition~2, $Y$ is not blocked through $\{(s,c,t)\}$. Therefore,  the allocation $Y$ assigns  $q^t_c$ students to $c$  under the financial terms $t$, and all of them are ranked higher than $s$ according to $\gg_c$. Therefore,  since $(s,c,t)\in \Ch (Y\cup Z')$,  there must exist a contract $(s',c,1-t)\in Z'$  such that $(s',c,t)\in Y$. 
We claim that the allocation must be blocked through  $\{(s,c,t),(s',c,1-t)\}$ (in violation of Condition~3). 

To see this, note that if $\left|Y \cap \left(S \times \left\{c\right\} \times \left\{1-t\right\}\right) \right| < q_c^{1-t}$ (i.e., $c$'s $(1-t)$-quota is not full), then $c$ would prefer to add the contract $(s,c,1-t)$ ($s$ is acceptable to $c$ since all contracts in $Z'$ are), and  swaps in the identities of the recipients of financial aid are less important to the college than the composition of the cohort. Otherwise, both of $c$'s quotas are full. Thus, there exists some $s''$ such that $(s'',c,1-t) \in Y$ and $s'' \notin \left[Ch_c(Y \cup Z')\right]_S$. This holds since $s$ (a new student) must displace someone (both quotas are full), and all students who get assigned  to $c$ under financial terms $t$ in $Y$ are ranked higher than $s$ according to $\gg_c$ (hence, for $(s,c,t)$ to be chosen by $c$, a contract with each of them must also be chosen). We denote by $s^{*}$ be the lowest ranked student according to $\gg_c$ such that $(s^{*},c,1-t) \in Y$ and $s^{*} \notin \left[Ch_c(Y \cup Z')\right]_S$.  Finally, note that $s'\ne s^{*}$, and by the properties of $\succ_c$ we have $Ch_c\left(Y \cup \left\{(s,c,t),(s',c,1-t)\right\}\right)=\left(\left(Y\cap\left(S\times\left\{c\right\}\times T  \right) \right)\cup \left\{(s,c,t),(s',c,1-t)\right\}\right)\setminus \left\{(s',c,t),\left(s^{*},c,1-t\right) \right\}$.  
\end{proof}

\begin{proof}[Proof of \cref{lem:outcome_equiv}]
On the run of \SPDA, colleges never face a choice from a set that includes two contracts with the same student. Consequently, colleges' choices on the run of the algorithm coincide with the union of the choices of the two corresponding auxiliary colleges in the related market. Hence, the corresponding matching of $Y^{\text{\SPDA}}$  in the related market is the student-optimal stable matching.
\end{proof}

\begin{proof}[Proof of \cref{prop:dgs}] 
A student who benefits from misrepresenting her preferences to \SPDA (\SRDA) can do the same in the related market (by \cref{lem:outcome_equiv}). The corollary, therefore, follows from the results of \citet{demange1987}, \citet{df1981}, and \citet{roth1982} for the related market.
\end{proof}

\begin{proof}[Proof of \cref{lem:stable_implies_stable}]
\textbf{If $Y$ corresponds to a stable matching then $Y$ is certainly stable:}
We prove the counterpositive:  that if $Y$ is not certainly stable, then the corresponding matching is not stable.    
Let $Y$ be a feasible allocation. If $Y$ is not certainly stable, then it must violate at least one of Conditions~1--3 of \cref{lem:stable_new} or Condition~4'. If $Y$ violates Condition~1 ($Y$ is not individually rational), then either some student is assigned an unacceptable contract, or some college violates its quotas, or some college is assigned an unacceptable student. In all cases  the corresponding matching also violates individual rationality. If $Y$ violates Condition~2  ($Y$ is blocked through $\left\{\left(s,c,t\right)\right\}$), then either $\left|Y\cap \left(S\times\left\{c\right\}\times\left\{t\right\}\right)\right|<q^t_c$ or there exists $(s',c,t)\in Y$ such that $s\gg_c s'$. In either case, $s$ and $(c,t)$ block the corresponding matching in the related market. If $Y$ violates Condition~3 ($Y$ is blocked through $\left\{\left(s',c,1-t\right),(s,c,t) \right\}$ such that  $(s,c,1-t)\notin Y$ and $(s',c,t) \in Y$), then the student $s'$ and the auxiliary college $(c,1-t)$ block the corresponding matching.
Finally, if an individually rational allocation, $Y$, violates Condition~4' then either there exits a college, $c$ and a pair of students that want to swap their terms in this college (in which case the higher ranked student according to $\gg_c$ and the auxiliary college that this student prefers form a blocking pair in the related market), or there exists a college $c$ where some student prefers to get different terms, and the quota of these terms is not full (in which case, the student and the auxiliary college that she prefers form a blocking pair in the related market).  

\textbf{If $Y$ corresponds to a stable matching in the related market, $Y$ allocates funding based on merit:}
Let $Y$ be a stable allocation that corresponds to a stable matching in the related two-sided matching market. Then, for each $c$ and $t$, if $s$ is matched with $(c,t)$ and $s' \gg_c s$ then $s'$ must weakly prefer her assignment to $(c,t)$.

\textbf{If $Y$ is merit-based certainly stable, then it corresponds to a stable matching in the related market:}
Since $Y$ is stable, the corresponding matching cannot violate individual rationality. Similarly, if it is blocked by $s$ and $(c,t)$ it must be that $(s,c,1-t)\in Y$, since otherwise $Y$ would be blocked by $(s,c,t)$. However, by Condition~4' if  $(s,c,1-t)\in Y$ and $(s,c,t) \succ_s (s,c,1-t)\in Y$ then $q_c^t$ students are assigned to $c$ under $Y$ and since $Y$ is merit-based, all of these students are ranked higher than $s$ according to $\gg_c$. 

\textbf{If $Y$ is merit-based stable, then there exists a merit-based certainly stable allocation that all students weakly prefer:} 
If $Y$ is certainly stable, we are done. Otherwise, Condition~4' is violated. Since funding is allocated based on merit, this means that there exist $s\in S$ and $c\in C$ such that $(s,c,t) \in Y$, $(s,c,1-t)\succ_s (s,c, t)$, and $Y\cap \left( S\times \{c\}\times{1-t} \right)<q_c^{1-t}$ (but the college prefers that $s$ is assigned under $t$). Let $s'$ be the highest ranked such student according to $\gg_c$. Let $Y'$ denote the allocation $Y\setminus \{(s',c,t)\} \bigcup\{(s',c,1-t)\}$. This allocation is stable as otherwise $Y$ is not stable (it is blocked through the same set of contracts, or through the same set of contracts together with $\{(s',c,1-t)\})$. It is merit-based since $s'$ is maximal with respect to $\gg_c$. If $Y’$ is certainly stable, we are done. Otherwise, repeat the process. Since in each iteration all students are weakly better off and one student is (strictly) better off, the process will reach a certainly stable allocation after a finite number of steps (Condition~4' cannot be violated if all students receive their first choices). 

\end{proof}
\begin{proof}[Proof of \cref{prop:similar}]
Follows from \cite{gs1962} by \cref{lem:stable_implies_stable} and \cref{lem:outcome_equiv}.
\end{proof}

\begin{proof}[Proof of \cref{prop:rural}] 
Follows from the rural hospital theorem for the related  market. 
\end{proof}

\begin{proof}[Proof of \cref{prop:consensus}] 
Follows from the consensus property in the related  
market.
\end{proof}

\begin{proof}[Proof of \cref{prop:cardinality}] 
Follows from \cref{DifferenrNumber} in the main text and \cref{DifferenrNumberCont} below. 
\end{proof}

\begin{example} \label{DifferenrNumberCont}
There are  three students, $S=\{r,p,g\}$, and two colleges, $C=\{h,c\}$. College $h$ has two seats, but only one of these seats is state-funded ($q^0_h=1$, $q_h^1=1$), and college $c$ has a single state-funded seat ($q^0_c=0$, $q_c^1=1$). Both colleges rank $r$ first, $p$ second, and $g$ third (i.e., $r\gg_h p\gg_h g$ and  $r\gg_c p\gg_c g$). Students' preferences are
$(r,h,1) \succ_r(r,h,0) \succ_r\emptyset$, $(g,h,1) \succ_g(g,h,0) \succ_g\emptyset$, and $(p,h,1) \succ_p(p,c,1) \succ_p\emptyset$. 


Under $Y^{\text{\SPDA}}=\left\{(r,h,1), (p,c,1), (g,h,0)\right\}$, $r$ and $g$ are assigned to $h$, and $p$ is assigned to $c$. An alternative stable allocation, $\{(r,h,0),(p,h,1)\}$, has the two highest-ranked students, $r$ and $p$, assigned to $h$, while the lowest-ranked student, $g$, remains unassigned. 

\end{example}

\begin{proof}[Proof of \cref{prop:algo}] 
The algorithm terminates since the finite set $A'$  becomes smaller in each iteration.
If it stops while $A'$ is not empty, the output is stable by construction. Otherwise, it is stable as it coincides with \SPDA. 
To see that the resulting stable allocation may not allocate funding based on merit note that it selects the non-merit-based stable allocation in Example~\ref{DifferenrNumber}. 
\end{proof}

\begin{proof}[Proof of \cref{prop:np}] 
We prove that \cref{prop:np} holds even if colleges are restricted to offer no more than one seat under each of the financial terms. Our proof relies on a reduction---we show that an algorithm that identifies a stable (certainly stable) allocation of maximum size quickly (in running time polynomial in the size of the input) can also be used to quickly produce a solution to another problem that is known to be NP-hard. Therefore, identifying such an algorithm proves that $P=NP$. The construction is provided in Supplemental~\cref{app:proofs:online}.
\end{proof}

\subsection*{Proof of \cref{prop:big}}
\begin{definition}\label{Def:E}
For $k+1$  distinct colleges, $c, c'_1, c'_2,\dots ,c'_k\in C^n$, and  $q^1_c+q^1_{c'_1} + q^0_{c'_1}+ 2(k-1)\bar{q}+1$ distinct students, $\{r_i\}_{i=1}^{q^1_c}$, $p$, $\{s^1_i\}_{i=1}^{q^1_{c'_1} + q^0_{c'_1}-1}$,  $w$, and $\{\{s^j_i\}_{i=1}^{2\bar{q}}\}_{j=2}^{k}$ let the event \[E^n\left(c,\{c'_j\}_{j=1}^{k},p,w,\{r_i\}_{i=1}^{q^1_c}, \{s^1_i\}_{i=1}^{q^1_{c'_1} + q^0_{c'_1}-1}, \{\{s^j_i\}_{i=1}^{2\bar{q}}\}_{j=2}^{k}\right)\] denote the case where: 
\begin{enumerate}
    \item $r_1\gg_c r_2\gg_c\dots \gg_c r_{q_c^1}\gg_c p$
    \item $s^1_1\gg_{c'_1} s^1_2\gg_{c'_1} \dots\gg_{c'_1}s^1_{q_{c'_1}^0+q_{c'_1}^1-1}\gg_{c'_1} w$ and $p\gg_{c'_1} w$
    \item $s^j_1\gg_{c'_j} s^j_2\gg_{c'_j} \dots\gg_{c'_j}s^j_{2\bar{q}}\gg_{c'_j} w$ for each $ 2\leq j\leq k$
    \item The only students who find contracts with $c$ acceptable are  $\{r_i\}_{i=1}^{q^1_c}$ and $p$. Formally, for all $s\in S^n\setminus\left(\{r_i\}_{i=1}^{q^1_c} \bigcup \{p\}\right)$ and all $t\in T$,  $\emptyset \succ_s (s,c,t)$. 
    \item The only students who find contracts with $c'_1$ acceptable are $p$, $\{s^1_i\}_{i=1}^{q^1_{c'_1} + q^0_{c'_1}-1}$, and $w$. Formally, $\emptyset \succ_s (s,c'_1,t)$ for all $s\in S^n\setminus\left(\{s^1_i\}_{i=1}^{q^1_{c'_1} + q^0_{c'_1}-1} \bigcup \{p,w\}\right)$ and all $t\in T$. 
        \item For $j\geq 2$, the only students who find contracts with $c'_j$  acceptable are  $\{s^j_i\}_{i=1}^{2\bar{q}}$ and $w$. Formally, $\emptyset \succ_s (s,c'_j,t)$ for all $s\in S^n\setminus\left(\{s^j_i\}_{i=1}^{2\bar{q}} \bigcup \{w\}\right)$ and all $t\in T$. 

    \item Each $r_j$ prefers $(r_j,c,1)$ to $(r_j,c,0)$ and prefers $(r_j,c,0)$ to any other contract. Formally, $(r_j,c,1)\succ_{r_j} (r_j,c,0)\succ_{r_j}(r_j,u,t)$ for each $r_j \in \{r_i\}_{i=1}^{q^1_c}$, and  $u\in C^n\setminus \{c\}$ and $t\in T$.
    \item $p$ prefers  $(p,c,1)$ to $(p,c'_1,1)$ and prefers $(p,c'_1,1)$ to any other contract. Formally, $(p,c,1)\succ_{p} (p,c'_1,1)\succ_{p} z$ for each  $z\in \left(\{p\}\times C^n \times T\right) \setminus \{  (p,c,1),(p,c'_1,1) \}$.
    \item $(w,c'_1,1)\succ_w (w,c'_2,1)\succ_w\dots \succ_w (w,c'_k,1)\succ_w \emptyset$.

\item Each $s^j_i$ prefers $(s^j_i,c'_j,1)$ to $(s^j_i,c'_j,0)$ and prefers $(s^j_i,c'_j,0)$ to any other contract. Formally, $(s^j_i,c'_j,1)\succ_{s^j_i} (s^j_i,c'_j,0)\succ_{s^j_i}(s^j_i,u,t)$ for each $2\leq j\leq k$, $s^j_i \in \{s^j_i\}_{i=1}^{\bar{q}}$, and $u\in C^n\setminus \{c\}$ and $t\in T$.

\end{enumerate}

\end{definition}

\begin{lemma} \label{lem:prob}
There exists $F>0$ so that, so long as Conditions 1--3 of \cref{Def:E} are met, 
 the event $E^n\left(c,\{c'_j\}_{j=1}^{k},p,w,\{r_i\}_{i=1}^{q^1_c}, \{s^1_i\}_{i=1}^{q^1_{c'_1} + q^0_{c'_1}-1}, \{\{s^j_i\}_{i=1}^{2\bar{q}}\}_{j=2}^{k}\right)$ 
is realized with probability  greater than $F /n^{q^1_c+q^1_{c'_1} + q^0_{c'_1}+ 2(k-1)\bar{q}+k+1}$ .    
\end{lemma}
\begin{proof}
First, observe that Conditions 1--3 depend on colleges rankings exclusively, so students' preferences are independent from them. Second, 
it is well-known since \citet{im2005x} that conditions of the form \textit{``$m$ schools are not ranked by some subset of students''} have a probability that is bounded below by a constant when each student is interested in a small number of colleges.\footnote{The probability that $m$ schools are not ranked by any student is greater than $(1-mk/n)^{\lambda n}$ converges to $e^{-\lambda mk}$. } Third, the unconditional probability that a student's preferences meet Condition~7 or 10 is greater than $f/n$ for $f>0$ (that depends on $k$), and the probability conditional of Conditions 4--6 being met is slightly greater.
 Similarly, the unconditional probability for the preferences of $p$ (respectively $w$) to meet Condition~8 (respectively 9) is greater than $f/n^2$ (respectively $f/n^k$) and the probability conditional of Conditions 4--6 being met is slightly greater. Since students' preferences are drawn independently, the product of the lower bounds is a valid lower bound for the probability of the event.    
\end{proof}

\begin{lemma}\label{lem:Markov}
Given a Hungarian college admissions market, with $s\gg_c \emptyset$ for all $s\in S$ and $c\in C$, there are at least $\left|S\right|\cdot(\left|S\right|-1)/4$ pairs of distinct students such that  both are ranked outside the top decile of  $\gg$ by at least 40 percent of the colleges. 
\end{lemma}

\begin{proof}
We prove this combinatorial claim using the probabilistic method. First, note that the fraction of pairs of students who both appear in the bottom 9 deciles of $\gg$ of at least $4n/10$ 
colleges is equal to the probability that this condition is satisfied
by a random pair of students drawn uniformly at random from all possible distinct pairs. Let $\mathbb{1}_c(s,s')$ equal 1 if $s$ and $s'$ are is in the top 10 percent of $\gg_c$ and zero otherwise. Then, if a pair of distinct students $(\mathbf{s},\mathbf{s'})$ is drawn uniformly at random, the expected value of  $\mathbb{1}_c(\mathbf{s},\mathbf{s'})$ is lower than $0.2$ using (using Bonferroni's inequalities). By Markov's inequality 
\[\Pr\left[ \sum_{c\in C} \mathbb{1}_c(\mathbf{s},\mathbf{s'})>0.4\left|C\right| \right]\leq \frac{0.2\left|C\right|}{0.4\left|C\right|}=\frac{1}{2}   \] which completes the proof. 
\end{proof}

\begin{lemma} \label{lem:count}
There exists $M>0$ such that there are more than $M\cdot  n^{q^1_c+q^1_{c'_1} + q^0_{c'_1}+ 2(k-1)\bar{q}+k+2}$ ways to select  $\left(c,\{c'_j\}_{j=1}^{k},p,w,\{r_i\}_{i=1}^{q^1_c}, \{s^1_i\}_{i=1}^{q^1_{c'_1} + q^0_{c'_1}-1}, \left\{\{s^j_i\}_{i=1}^{2\bar{q}}\right\}_{j=2}^{k}\right)$ such that Conditions 1--3 of \cref{Def:E}  are met. 
\end{lemma}

\begin{proof}
By \cref{lem:Markov} there are at least $(n/\lambda-2)^2/4$ ways to choose pairs of students that are ranked outside the top decile by at least $n/4$ colleges. 
Given such a pair of students, let 
 $c$ and $\{c'_j\}_{j=1}^{k}$ be colleges such that both students are ranked at the bottom 90 percent (three are at least $\alpha \cdot n^{k+1}$ ordered ways the choose them, for $\alpha>0$). Label the member of the pair that is ranked lower in $c'_1$ as $w$ and the other student as $p$. Choose  $\{r_i\}_{i=1}^{q^1_c}$, $\{s^1_i\}_{i=1}^{q^1_{c'_1} + q^0_{c'_1}-1}$, and $\{\{s^j_i\}_{i=1}^{2\bar{q}}\}_{j=2}^{k}$ distinct students who appear at the top 10 percent of the corresponding colleges ($c$ for $r_i$'s and $c'_j$ for $s^j_i$'s). There at least $\beta\cdot n^{q^1_c+q^1_{c'_1} + q^0_{c'_1}-1 +2(k-1)\bar{q}}$ possible selections, for some $\beta>0$. Multiplying the three bounds we get the required result. 
\end{proof}

\begin{lemma}\label{lem:initialization}

For any college $c$, and any two different selections, $z$ and $z'$, of \[\left(\{c'_j\}_{j=1}^{k},p,w,\{r_i\}_{i=1}^{q^1_c}, \{s^1_i\}_{i=1}^{q^1_{c'_1} + q^0_{c'_1}-1}, \{\{s^j_i\}_{i=1}^{2\bar{q}}\}_{j=2}^{k}\right),\] the events 
$E^n\left(c,z\right)$ and $E^n\left(c,z'\right)$ are disjoint. Furthermore, there exists a polynomial-time algorithm that identifies the set $\left\{z \mid E^n\left(c,z\right) \text{ is realized}\right\}$.

\end{lemma}

\begin{proof}
We describe a polynomial-time algorithm and show that it must identify the only realized event of the form $E^n\left(c,z\right)$, or determine that no such event is realized.
\paragraph{Algorithm}
\begin{enumerate}
\item Find a student, $s$, that ranks $(s,c,1)$ first and $(s,c',1)$ second, for $c'\ne c$. If no such student exists, declare that the set is empty  and stop. Otherwise, this student is the candidate $p$, and the college $c'$ is the candidate $c'_1$. 
\item Find all students other than the candidate $p$ that rank contracts with $c$. If their number differs from  $q_c^1 +1$ declare that the set is empty  and stop. Otherwise, the lowest ranked one according to $\gg_c$ is the candidate $w$. If one of the other  $q_c^1 $
  students does not rank the funded contract with $c$ first and the unfunded contract with $c$ second,  declare that the set is empty  and stop.  Otherwise, they are the candidate $\left\{r_i\right\}_{i=1}^{q^1_c}$ (with indices determined based on $\gg_c$ and Condition~1 of \cref{Def:E}).
\item If the candidate $w$ ranks some funded contract higher than the funded contract with the candidate $c'_1$, declare that the set is empty  and stop. Otherwise, set the candidate  $\left\{c'_j\right\}_{j=2}^{k}$ as the remaining colleges (with indices determined based on the preferences of the candidate $w$ and Condition~9 of \cref{Def:E}).
\item Verify that the number of students ranking the candidate $c'_j$ is exactly as required in the conditions ($2\bar{q}$ for $j>1$, and $q^1_{c'_1} + q^0_{c'_1}-1$ for the candidate $c'_1$) and that all of them rank the funded contract with $c'_j$ first and the unfunded contract with $c'_j$ second.  If this isn't the case, declare that the set is empty  and stop. Otherwise, set these students as candidate $s_i^j$, with indices determined by Conditions~2 and 3 of \cref{Def:E}. 
\item Declare that the event indexed by the vector of candidates was realized. 
\end{enumerate}
It is clear that if an event of the form $E^n\left(c,z\right)$ is realized the algorithm must identify it. Since the output of the algorithm is a unique event, this proves that  $E^n\left(c,z\right)$ and $E^n\left(c,z'\right)$ are disjoint for different selections of $z$ and $z'$, as required. 
\end{proof}

\begin{lemma}\label{lem:different winner}
For any student $w$, and any two different selections, $z$ and $z'$, of \[\left(c,\{c'_j\}_{j=1}^{k},p,\{r_i\}_{i=1}^{q^1_c}, \{s^1_i\}_{i=1}^{q^1_{c'_1} + q^0_{c'_1}-1}, \{\{s^j_i\}_{i=1}^{2\bar{q}}\}_{j=2}^{k}\right),\] the events 
$E^n\left(w,z\right)$ and $E^n\left(w,z'\right)$ are disjoint.\footnote{We abuse notation for readability. The notation is intended to represent a student playing the role of $w$ in \cref{Def:E}. The same remark applies to \cref{lem:different poor}.}
\end{lemma}
\begin{proof}
Towards contradiction, assume that both events are realized. 
By Condition~9 of \cref{Def:E}, $z$ and $z'$ share the same $c'_1$ (it must be a party to the highest ranked funded contract according to  $w$'s preferences).  
Hence, $z$ and $z'$ share the same $p$ (it is the only student that ranks the funded contract with $c'_1$ second after another funded contract.   
This implies that $z$ and $z'$ share the same $c$ (it is the college that is a party to the first ranked contract of $p$). 
But this is a contradiction to \cref{lem:initialization}, that states that events that share the same selection of $c$ are disjoint. 
\end{proof}

\begin{lemma}\label{lem:different poor}
For any student $p$, and any two different selections, $z$ and $z'$, of \[\left(c,\{c'_j\}_{j=1}^{k},w,\{r_i\}_{i=1}^{q^1_c}, \{s^1_i\}_{i=1}^{q^1_{c'_1} + q^0_{c'_1}-1}, \{\{s^j_i\}_{i=1}^{2\bar{q}}\}_{j=2}^{k}\right),\] the events 
$E^n\left(p,z\right)$ and $E^n\left(p,z'\right)$ are disjoint.
\end{lemma}
\begin{proof}
By Condition~8 of \cref{Def:E}, if both events are realized, then $z$ and $z'$ share the same $c$  (it must be a party to the highest ranked  contract according to  $p$'s preferences). 
But this is a contradiction to \cref{lem:initialization}.  
\end{proof}

\begin{lemma}\label{lem:different r1}
For any student $r_1$, and any two different selections, $z$ and $z'$, of \[\left(c,\{c'_j\}_{j=1}^{k},p,w,\{r_i\}_{i=2}^{q^1_c}, \{s^1_i\}_{i=1}^{q^1_{c'_1} + q^0_{c'_1}-1}, \{\{s^j_i\}_{i=1}^{2\bar{q}}\}_{j=2}^{k}\right),\] the events 
$E^n\left(p,z\right)$ and $E^n\left(p,z'\right)$ are disjoint.
\end{lemma}
\begin{proof}
By Condition~7 of \cref{Def:E}, if both events are realized, then $z$ and $z'$ share the same $c$  (it must be a party to the highest ranked  contract according to  $r_1$'s preferences). 
But this is a contradiction to \cref{lem:initialization}.  
\end{proof}

\begin{proof}[Proof of \cref{prop:big}]
First, note that by \cref{lem:count,lem:prob} the expected number of events $E^n(\cdot)$ that are realized is at least $L\cdot n$ for some $L \in (0,1)$. Second, \cref{lem:initialization} guarantees that fixing $c$, the events $E^n(c,z)$ are disjoint for different selections of $z$ (where $z$ is a shorthand for all other selections needed to an event as in \cref{Def:E}).  This means that the probability that fewer than $Ln/2$ colleges play the role of $c$ in an event $E^n(c,z)$ that is realized is at most $(2-2L)/(2-L)$. In other words, the probability that more than $Ln/2$ colleges  play the role of $c$ in an event $E^n(c,z)$ that is realized is at least $L/(2-L)$. 

Third, note that if the preference-flip algorithm is initialized to include in $A'$ only the contracts of the form  $(r_1,c,1)$ for  events $E^n\left(c,\{c'_j\}_{j=1}^{k},p,w,\{r_i\}_{i=1}^{q^1_c}, \{s^1_i\}_{i=1}^{q^1_{c'_1} + q^0_{c'_1}-1}, \left\{\{s^j_i\}_{i=1}^{2\bar{q}}\right\}_{j=2}^{k}\right)$ that are realized, the resulting allocation differs from $Y^{\SPDA}$ only in that students playing the role of $r_1$ will get $(r_1,c,0)$ instead of $(r_1,c,1)$, students playing the role of $p$ will be assigned $(p,c,1)$ instead of $(p,c'_1,1)$, and students playing the role of $w$ will be assigned $(w_,c'_1,0)$ instead of being unassigned. \cref{lem:different winner} guarantees that $w$'s are different across these events. 
Hence, this algorithm never decreases the number of assigned students, and with probability at least $L/(2-L)$, it increases the number of assigned students by a factor of, at least, $L/2\lambda$. \cref{lem:different poor,lem:different r1} provide a similar guarantee for the upward transfers and funding losers.   

Finally, we note that it is possible to initialize the algorithm in polynomial time (e.g., by applying the algorithm identified in \cref{lem:initialization} $n$ times, once for each college), and that our algorithm will terminate in polynomial time (indeed, after the initialization, it will run \SPDA once, and it is well known that the running-time of \SPDA is polynomial). 
\end{proof}

\begin{remark*}
Our proof relies only on the probability with which the events from \cref{Def:E} are realized. So long as this probability remains sufficiently large (at least for a large fraction of these events) our proof will go through. Examples include modes of randomization that restrict $substitutes$ to rank funded positions higher then unfunded positions as well as modes of randomization with correlation structures as in \citet{kp2009}, where some schools are more popular than others, or students' preferences are drawn from heterogeneous distributions (e.g., some students prefer economics majors while others prefer physics).
\end{remark*}

The proofs of \cref{prop:insens}, \cref{th:ind_alg}, and \cref{prop:sub_vs_star1} are provided in Supplemental~\cref{app:proofs:online}.

\clearpage
\begin{center}
{\bf {\Large Supplemental Material (for Online Publication)}}
\end{center}

\clearpage
\setcounter{table}{0}
\setcounter{figure}{0}
\renewcommand{\thetable}{B\arabic{table}}
\renewcommand{\thefigure}{B\arabic{figure}}

\section{Supplemental Proofs (for Online Publication)}\label{app:proofs:online}

\setcounter{table}{0}
\setcounter{figure}{0}
\renewcommand{\thetable}{C\arabic{table}}
\renewcommand{\thefigure}{C\arabic{figure}}

\begin{proof}[Proof of \cref{prop:np}] 
We will prove that \cref{prop:np} holds even if colleges are restricted to offer no more than one seat under each of the financial terms.  Our proof relies on a reduction---we show that an algorithm that identifies a stable (certainly stable) allocation of maximum size quickly (in running time polynomial in the size of the input) can also be used to quickly produce a solution to another problem that is known to be NP-hard. Therefore, identifying such an algorithm will prove that $P=NP$.

The NP-hard problem that we reduce is a special version of MAX-SMTI, studied in \citet{np2}. The description of the problem is as follows. An instance of \textit{the restricted stable marriage problem with incomplete lists and ties} consists of a set of $n$ men, $M$, 	and a set of $n$ women $W=W^{\sim}\cup W^{\nsim}$, where men in $M$ and women in $W^{\nsim}$ have strict preferences over agents on the other side and remaining unmatched, and women in $W^{\sim}$ are indifferent between two acceptable men and find all other men unacceptable. Given such an instance, $I$, a matching is \textit{weakly stable} if no agent's assignment is unacceptable and there is no pair of agents that strictly prefer one another to their assignment. The problem of deciding whether $I$ admits a stable matching under which all women are matched with men is NP-complete \citep{np2}.

Given and instance $I=\left<M,W^{\sim},W^{\nsim},\left\{\succeq_{i}\right\}_{i\in M\cup W}\right>$ we define \textit{the corresponding Hungarian college admissions market} as follows. The set of students is $S^I=\left\{s_i  \right\}_{i\in M\cup W^{\sim}}$, with elements corresponding to each man and each woman in $W^{\sim}$. The set of colleges $C^I=\left\{c_w\right\}_{w\in W^{\sim}\cup W^{\nsim}}$. For each $w\in W^{\nsim}$, $c_w$ has a single state-funded seat ($q_{c_w}^1=1$ and $q_{c_w}^0=0$),  while other colleges have one state-funded seat and one self-funded seat (i.e., for all $w\in W^{\sim}$, $q_{c_w}^1=1$ and $q_{c_w}^0=1$).   

For each $w\in W^{\nsim}$, $c_w$ ranks students (and the outside option) according to the preferences of the corresponding woman ($s_m\gg_{c_w}s_{m'}$ iff $w$ prefers $m$ to $m'$, $s_m \gg_{c_w} \emptyset$ iff $w$ prefers $m$ to her outside option, and $\emptyset\gg_{c_w} s_{w'}$ for all $w'\in W^{\sim}$). For each $w \in W^{\sim}$, $c_w$ has three acceptable students $s_w\gg_{c_w} s_{p_w} \gg_{c_w} s_{f_w}$ where $p_w$ and $f_w$ are the two men acceptable to $w$ (ranked according to some arbitrary rule). Furthermore, for each $w \in W^{\sim}$,  $s_w$'s most preferred alternative is the state-funded seat in $c_w$ followed by a self-funded seat in this college, with all other alternatives being unacceptable ($\left(s_w,c_w,1\right)\succ_{s_w}\left(s_w,c_w,0\right)\succ_{s_w}\emptyset$, and $\emptyset\succ_{s_w}\left(s_w,c,t\right)$ for all $c \ne c_w$ and all $t\in\{0,1\}$). 

Finally, for each $m\in M$, $s_m$ ranks contracts according to $m$'s preferences (ranking the state-funded alternative over self-funded one) except for contracts corresponding to ${w\in W^{\sim}}$ where only one of the terms is acceptable (depending on the students ranking on $c_w$'s ranking). Formally, $s_m$'s preferences satisfy the following conditions: 
\begin{enumerate}
\item  for each $w\in W^{\sim}$, $(s_m,c_w,1)\succ _{s_m}\emptyset$  only if $s_m$ is ranked second according to $\gg_{c_w}$, and $(s_m,c_w,0) \succ_{s_m} \emptyset$ only if $s_m$ is ranked third according to $\gg_{c_w}$.   
\item for each $w\in W$ and $t$ such that the contracts $(s_m,c_w,t)$ is not unacceptable by the first condition $(s_m,c_w,t) \succ_{s_m} \emptyset$ iff $w$ is acceptable to $m$.
\item for each $w,w'\in W$ and  $t,t'$ such that  the contracts $(s_m,c_w,t)$ and $(s_m,c_{w'},t')$ are not unacceptable by the first condition,  $(s_m,c_w,t)\succ_{s_m}(s_m,c_{w'},t')$ iff $m$ prefers $w$ to $w'$.

\end{enumerate}


%
\begin{lemma}\label{lem:np}
An instance $I$ of the  restricted stable marriage problem with incomplete lists and ties admits a stable matching under which all women are matched with men if and only if the corresponding Hungarian college admissions market  admits a stable allocation of cardinality $|S^I|$ (i.e., where all students are assigned to colleges).  
\end{lemma}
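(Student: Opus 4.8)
The plan is to set up an explicit seat-by-seat dictionary between matchings of $I$ that saturate all women and full-size stable allocations of the corresponding market, and then to verify stability through the four conditions of \cref{lem:stable_new}. I would begin with the counting identity that pins down what a size-$|S^I|$ allocation must look like. Because $|M|=|W|=n$ and $W=W^{\sim}\cup W^{\nsim}$, the student set has size $|S^I|=n+|W^{\sim}|=|W^{\nsim}|+2|W^{\sim}|$, which is exactly the number of seats ($|W^{\nsim}|$ lone state seats for the $W^{\nsim}$-colleges and a state--self pair for each $W^{\sim}$-college). Thus a stable allocation attains cardinality $|S^I|$ iff every seat is filled and every student is placed. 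Since each dummy student $s_w$ is acceptable only at $c_w$, this forces $s_w$ into $c_w$; the second seat of $c_w$ must then be filled by the unique man who can accept it, namely $p_w$ in the state seat when $s_w$ takes the self seat, or $f_w$ in the self seat when $s_w$ takes the state seat. I will call these the two admissible gadget states. Likewise, individual rationality forces each $W^{\nsim}$-college to hold a single man whom the corresponding woman finds acceptable.

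For the forward direction I would take a weakly stable matching $\mu$ of $I$ saturating all women---necessarily a perfect matching since $|M|=|W|$---and build $Y$ by placing $(s_m,c_w,1)$ whenever $\mu(w)=m$ for $w\in W^{\nsim}$, and by putting the gadget at $c_w$ into its $p_w$-state (resp.\ $f_w$-state) when $\mu(w)=p_w$ (resp.\ $\mu(w)=f_w$) for $w\in W^{\sim}$. This $Y$ has size $|S^I|$ by the counting above, and I would verify its stability one condition at a time. Individual rationality and the absence of singleton blocks translate directly into individual rationality of $\mu$ and the absence of a blocking pair $(w,m)$ with $w\in W^{\nsim}$; the construction of $s_m$'s preferences (ranking colleges as $m$ ranks women, and making $(s_m,c_w,\cdot)$ acceptable exactly when $w$ is acceptable to $m$) is what makes this translation exact.

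For the reverse direction I would start from a stable $Y$ of size $|S^I|$, read off the forced gadget states and single men described above, and define $\mu(w)$ to be the man assigned to $c_w$; every woman is then matched. If $\mu$ had a blocking pair $(w,m)$, then $w$ strictly prefers $m$ to $\mu(w)$, and since every $w\in W^{\sim}$ is indifferent between her only two acceptable men, $w$ must lie in $W^{\nsim}$. I would then check that $\{(s_m,c_w,1)\}$ is a singleton block of $Y$: $s_m$ strictly prefers $(c_w,1)$ to its current contract (since $m$ prefers $w$ to $\mu(m)$, encoded faithfully), while $c_w$ prefers $s_m$ to its incumbent (since $w$ prefers $m$ to $\mu(w)$, i.e.\ $s_m\gg_{c_w}s_{\mu(w)}$), contradicting stability.

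The crux---and the step I expect to be delicate---is verifying that the $W^{\sim}$-gadget is immune to exactly the blocks that the woman's indifference neutralizes in $I$, namely the funding-swap blocks of Conditions~3 and~4 of \cref{lem:stable_new}. In the $f_w$-state, the dummy $s_w$ occupies the coveted state seat yet cannot be dislodged: an intruding $p_w$ would need $s_w$ to step down to the self seat, but $s_w$ strictly prefers state funding and refuses, killing the block. In the $p_w$-state, $s_w$ sits in the self seat and would prefer state funding, but promoting her would either empty a seat (which $c_w$ strictly dislikes, by the lexicographic dominance of cohort composition over funding) or require importing $f_w$, which $c_w$ declines because $p_w\gg_{c_w}f_w$ makes the incumbent cohort $\{s_w,p_w\}$ strictly better than $\{s_w,f_w\}$. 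Thus both gadget states are stable regardless of outside demand for the seats, so that $W^{\sim}$-women---like their indifferent counterparts in $I$---never participate in a block. Checking these local deviations against the precise choice function $\Ch_{c_w}$, and confirming that no cross-college block is introduced, is the heart of the argument; once the gadget is pinned down, the remaining translations are mechanical.
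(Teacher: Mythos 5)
Your proposal is correct and takes essentially the same route as the paper's proof: the same construction of the allocation from a woman-saturating weakly stable matching (and conversely), the same observation that the seats-equals-students count forces the dummy students and the two admissible gadget states, and the same blocking analysis (women's ties neutralize any $W^{\sim}$ block, while a $W^{\nsim}$ blocking pair corresponds exactly to a singleton block). The only difference is presentational: you organize the verification around Conditions 1--4 of \cref{lem:stable_new}, whereas the paper argues directly about which individually rational allocations each two-seat college prefers.
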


\begin{proof}
Assume $I$ admits a stable matching $\mu$ under which all women are matched with men. Consider the allocation $Y$  consisting of the contracts $\left\{\left(s_{\mu\left(w\right)},c_w,1\right)\right\}_{w\in W^{\nsim}}$ together with the contracts  $\left\{\left(s_{\mu\left(w\right)},c_w,t_{\mu\left(w\right)}\right)\right\}_{w\in W^{\sim}}$ and $\left\{\left(s_w,c_w,1-t_{\mu\left(w\right)}\right)\right\}_{w\in W^{\sim}}$, where $t_{\mu \left(w\right)}$ guarantees that $\left(s_{\mu\left(w\right)},c_w,t_{\mu\left(w\right)}\right) \succ_{s_\mu \left(w\right)}\emptyset$ for each $w\in W^{\sim}$.\footnote{Existence of such financial terms is guaranteed since $w$ must be acceptable to $\mu(w)$ and vice versa, by the stability of $\mu$.} 
Then $|Y|=|S^I|$ and $Y$ is a (certainly) stable allocation. 

The only non-trivial case to verify is that of potential blocks involving colleges with two seats (corresponding to women in $W^{\sim}$). This case follows since for each of these colleges the student involved in $\left(s_{\mu\left(w\right)},c_w,t_{\mu\left(w\right)}\right) \succ_{\mu \left(w\right)}\emptyset$ finds matching with the college under the other financial terms unacceptable. Hence, if the college is assigned its first- and second-ranked students, the college does not prefer any other allocation that is individually rational. Furthermore, if the college is assigned its first- and third-ranked students, then the first ranked student receives state funding (her most preferred contract), and the only individually rational allocation that the college prefers would require her to receive a less preferred alternative (the self-funded seat). 

In the other direction, let $Y$ be a (certainly) stable allocation such that $|Y|=|S^I|$. For each $w\in W^{\sim}$ the student $s_w$ must be assigned to $c_w$ (under some financial terms). This holds since $w$ ranks $s_w$ first, and $s_w$ only ranks the contracts with $c_w$ as acceptable. Furthermore, since $S^I$ is equal to the number of available seats, all seats are assigned.

The matching $\nu$ that assigns $w$ to $m$ if and only if $\left(\left\{s_m\right\}\times\left\{c_w\right\}\times T\right) \cap Y \ne \emptyset $  is individually rational, and it clearly matches every women to a men.\footnote{That $\nu$ is indeed a matching follows by the feasibility of $Y$ and the fact that in any stable allocation colleges with more than one seat must be a side to a contracts with a student $s_w$ for some $w\in W^{\sim}$.} 
We claim that it is also weakly stable. To see this, we note that no $w\in W^{\sim}$ can be involved in a blocking pair, as these women get their (tied) first choice men. But a blocking involving $w\in W^{\nsim}$  and $m\in M$ implies that $Y$ is blocked through $\left\{\left(s_m,c_w,1\right)\right\}$.      
\end{proof}

\begin{proposition}
Finding a maximum-size stable (or certainly stable) allocation in Hungarian college admissions markets is NP-hard, even when colleges offer no more than one seat under each of the financial terms (i.e., $q_c^t\leq1$ for each $c$ and $t$).
\end{proposition}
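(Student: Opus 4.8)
The plan is to establish NP-hardness through the polynomial-time reduction already constructed above, invoking \cref{lem:np} to translate the decision version of the optimization problem into the NP-complete problem studied in \citet{np2}. Concretely, I would set up a Turing reduction: I would show that any polynomial-time algorithm computing a maximum-size stable (or certainly stable) allocation could be used as a subroutine to decide, in polynomial time, whether an arbitrary instance $I$ of the restricted stable marriage problem with incomplete lists and ties admits a stable matching that saturates all women—a problem that is NP-complete by \citet{np2}. Since the construction produces colleges with $q_c^t\le 1$ throughout, this simultaneously yields the stronger, restricted form of \cref{prop:np}.

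First, I would verify that the map $I \mapsto$ (corresponding Hungarian college admissions market) is computable in time polynomial in the size of $I$. The market has $|M|+|W^{\sim}|$ students and $|W^{\sim}|+|W^{\nsim}|$ colleges, and each preference list and ranking is obtained by a direct, local transcription of the preferences in $I$, so the construction is clearly polynomial. I would also record that, by construction, every college offers at most one seat under each financial term ($q_{c_w}^1 = 1$ and $q_{c_w}^0 \in \{0,1\}$), so the constructed instances lie inside the restricted class for which the proposition is asserted.

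Second, I would bound the maximum cardinality of any feasible allocation. Since feasibility requires that each student appear in at most one contract, every feasible allocation in the constructed market has size at most $|S^I|$. Hence an allocation of cardinality exactly $|S^I|$ is one in which all students are assigned, and such an allocation is automatically of maximum size. Combining this with \cref{lem:np}, the instance $I$ is a yes-instance if and only if the maximum-size stable (certainly stable) allocation in the corresponding market has cardinality $|S^I|$; otherwise every stable allocation—and in particular a maximum one—has cardinality strictly below $|S^I|$.

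Finally, I would assemble the reduction: given $I$, construct the market, run the hypothesized polynomial-time optimizer, and compare the cardinality of its output to $|S^I|$, answering ``yes'' precisely when the two coincide. Correctness follows from the previous step, and the total running time is polynomial, so the existence of such an optimizer would force $P = NP$. The only genuinely delicate part of the whole argument is \cref{lem:np} itself, whose reverse direction must rule out blocks at the two-seat colleges (those corresponding to $W^{\sim}$) by exploiting the engineered single-acceptable-terms structure of the students $s_m$; since that lemma is already in hand, the proposition follows with only the routine bookkeeping sketched above.
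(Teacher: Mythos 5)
Your proposal is correct and follows essentially the same route as the paper: a polynomial-time reduction from the restricted stable marriage problem with incomplete lists and ties, invoking \cref{lem:np} to equate yes-instances with corresponding markets whose maximum-size stable allocation has cardinality $|S^I|$, so that a hypothetical polynomial-time optimizer would decide the NP-complete problem of \citet{np2}. Your write-up merely makes explicit two details the paper leaves implicit---that $|S^I|$ is a hard upper bound on any feasible allocation's size, and that the construction already satisfies $q_c^t\leq 1$---which is fine but not a different argument.
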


\begin{proof}
Transforming the restricted stable marriage problem with incomplete lists and ties to the corresponding Hungarian college admissions market clearly requires only a polynomial running time. 
An algorithms that finds a maximum-size stable (or certainly stable) allocation in Hungarian college admissions markets should find a stable allocation that matches all students if and only if one exists. If the algorithm is guaranteed to output a maximum-size stable allocation in polynomial time, transforming the original marriage problem and running the algorithm will give an answer in polynomial time to whether the original marriage problem admits a stable matching under which all women are matched with men (by \cref{lem:np}). And this problem is NP-complete \citep{np2}. 
\end{proof}
This completes the proof of \cref{prop:np}.
\end{proof}

\begin{remark*}
MAX-SMTI is not approximable within a factor of 21/19, unless $P=NP$ \linebreak \citep{np1}. Using this fact and the construction used to prove \cref{lem:np}, one can establish that the maximum-size stable (or certainly stable) allocation in Hungarian college admissions markets is not approximable within a constant factor 15/14, even when colleges offer no more than one seat under each of the financial terms (i.e., $q_c^t \leq 1$ for every $c$ and $t$). 
\end{remark*}

\subsection*{Proof of \cref{prop:insens}}
\begin{lemma}\label{lem:insens}
 Let Y be a stable allocation in a Hungarian college admissions market where all students are not sensitive to funding. Consider a modified market that is identical except that for each college, $c$, quotas are given by $\hat{q}^0_c=q^0_c +q^1_c$ and $\hat{q}^1_c=0$. Then, the allocation $\left\{\left(s,c,0\right)\mid  \exists t\in T,\text{ s.t. }  (s,c,t)\in Y   \right\}$ is stable in the modified market. 
  \end{lemma}
  \begin{proof}
  The modified market is (effectively) a matching without contracts market with responsive preferences. Hence, it is sufficient to consider blocking by pairs and violations of individual rationality (\cref{lem:stable_new}). 
  Individual rationality is guaranteed by the stability of $Y$ and students not being sensitive to funding. Similarly, if the  matching (allocation) is blocked by the pair $s$ and $c$ (through $\{(s,c,0)\}$), then $Y$ is blocked through $\{(s,c,0)\}$ or $\{(s,c,1)\}$. 
\end{proof}
  
\begin{proof} [Proof of \cref{prop:insens}]
 By \cref{lem:insens}, each stable allocation $Y$ has the same number of students assigned to each college as the number of students that are assigned to the corresponding college in a stable matching of a certain matching market without contracts where colleges preferences are responsive. The statement therefore follows from the rural hospital theorem.  
\end{proof}
 
\subsection*{Proof of \cref{th:ind_alg}}

\begin{lemma} \label{lem:only one}
If students rank at most one contract with each college, an allocation is stable if and only if it corresponds to a stable matching in the related market.  
\end{lemma}
\begin{proof}
The ``if'' direction follows from \cref{prop:merit}. 
For the ``only if'' direction, note that if the matching is not individually rational, so is the corresponding allocation. Furthermore, if an individually rational matching is blocked, it must be blocked by a pair $(c,t)$ and $s$, such that $s$ is not matched to  $(c,1-t)$ (by individual rationality) nor to $(c,t)$ (by the definition of blocking). Therefore, the corresponding allocation is blocked through the contract $(s,c,t)$. 
\end{proof}

\begin{lemma}\label{bounds}
Let $Y$ be a stable allocation in  $\left< S,C,\left\{\succ_c,\gg _c, q_c^0, q_c^1\right\}_{c \in C},\{\succ_s\}_{s \in S} \right>$, and let $B$ denote the set of students that consider two contracts with the same college acceptable. Then each $s'\in S\setminus B$ weakly prefers $Y^{\SPDA}\left[ \left< S\setminus B,C,\left\{\succ_c,\gg _c, q_c^0, q_c^1\right\}_{c \in C},\{\succ_s\}_{s \in S\setminus B} \right>\right]$ to $Y$, and prefers $Y$ to the allocation $Y^{\SRDA}\left[ \left< S\setminus B,C,\left\{\succ_c,\gg _c, \max\left\{0,q_c^t-\left|\left\{s \in B  \mid (s,c,t)\succ_s \emptyset \right\}  \right|\right\}\right\}_{c \in C, t\in \{0,1\}},\{\succ_s\}_{s \in S\setminus B} \right>\right]$. Furthermore, the opposite comparisons hold for colleges that are not ranked by students in $B$. 
\end{lemma}
\begin{proof}
Consider 
$\left< S\setminus B,C,\left\{\succ_c,\gg _c, \max\left\{0,q_c^t-\left|\left\{s \in B  \mid (s,c,t)\succ_s \emptyset \right\}  \right|\right\}\right\}_{c \in C, t\in \{0,1\}},\{\succ_s\}_{s \in S\setminus B} \right>$, the market where students in $B$ are removed and the quotas are reduced at contracts they rank. $Y  \restriction _{S \setminus B}$ must be stable in this market (otherwise, $Y$ is not stable). Furthermore, since students in $S \setminus B$ do not rank more than one contract with the same college, the allocation must also be stable in the related matching market (by \cref{lem:only one}). Hence, by the consensus property, all students in $S\setminus B$ prefer the outcome of \SPDA in this market to $Y\restriction _{S \setminus B} $ to the outcome of \SRDA (and the opposite holds for colleges not ranked by students in $B$). Furthermore, by Theorem 2.25 in \citet{rs1990}, all students in $S\setminus B$ prefer the outcome of \SRDA in $\left< S\setminus B,C,\left\{\succ_c,\gg _c, q_c^0, q_c^1\right\}_{c \in C},\{\succ_s\}_{s \in S\setminus B} \right>$ to  the outcome of \SRDA in $\left< S\setminus B,C,\left\{\succ_c,\gg _c, \max\left\{0,q_c^t-\left|\left\{s \in B  \mid (s,c,t)\succ_s \emptyset \right\}  \right|\right\}\right\}_{c \in C, t\in \{0,1\}},\{\succ_s\}_{s \in S\setminus B} \right>$ (and the opposite holds for colleges not ranked by students in $B$).
\end{proof}

\begin{lemma}[\citeauthor{kp2009}]\label{lem:dropping}
Consider a matching (without contracts) market and a college $c$ that is matched with $\mu(c)$ under \SPDA. If for any non-empty subset of $\mu(c)$, dropping this subset does not result in an additional offer to $c$ relative to \SPDA, then $c$ is matched with $\mu(c)$ also under \SRDA.
\end{lemma}

\paragraph{Randomized $(\hat{c},\hat{t})$ Algorithm}\label{subs:randomized_algo}

Building on \cref{bounds,lem:dropping}, we present a randomized algorithm that declares failure with probability higher than the probability that there are stable allocations $Y$ and $Y'$ such that $Y\cap S\times \{\hat{c}\}\times\{\hat{t}\}\ne Y'\cap S\times \{\hat{c}\}\times\{\hat{t}\}$ (i.e., different sets of agents are assigned to $\hat{c}$ under $\hat{t}$ in different stable allocations). 
Roughly, the algorithm randomly draws the identities and preferences of students in $B$ (the set defined in \cref{bounds}) and fails if $|B|$ is unusually large or if one of its members ranks a contract with $\hat{c}$. If the algorithm doesn't fail, agents in $B$ are removed and the algorithm proceeds to mimic the distribution of outcomes of \SPDA, where students preferences are drawn as needed based on the distribution that conditions on them not being members of $B$.  Next, the algorithm transitions from the outcome of \SPDA in this market, to a market where quotas are reduced as in \cref{bounds}. The algorithm declares failure if $\hat{c}$ receives an offer in the process (a necessary condition for the assignment of $(\hat{c},\hat{t})$ to change) or if the process induces an ``unusually long" chain of offers. 

\begin{enumerate}
	\item[Step 1.] {\bf Drawing the set $B$ and their preferences:} For each student in $S$ draw preferences according to the $independent$ mode of randomization. 
Denote by $B$ the (random) set of students who rank two contracts with the same college. If a student in $B$ ranks a contract with $\hat{c}$, or if $|B|>\left(1+\log^2(|C|)\right) \cdot \left|S\right|\cdot  8k^2/|C|$ declare failure and stop.  Otherwise, ``forget'' the preferences of students in $S \setminus B$ and continue to Step $2$.
	\item[Step 2.] {\bf \SPDA excluding $B$:}
\begin{enumerate}
	\item[Step 2.1] For each $s\in S \setminus B$,  draw independently and uniformly at random a college from $C$, and draw independently and uniformly at random terms from $T=\{0,1\}$ with equal probability for $0$ and $1$. Have each of these students propose the auxiliary college (college-terms pair) she drew. Auxiliary colleges accept temporarily the most preferred subset of students and reject all others. 
	\item[Step 2.j j>1] If all students in $ S \setminus B$ are temporarily assigned, or have been rejected $2k$ times,  set $Y^1$ to be the resulting allocation and proceed to Step $3$. Otherwise, for each student that is not temporarily assigned and has not been rejected $2k$ times, draw independently and uniformly at random a college in $C$ to which the students did not apply in a previous step of the algorithm (under any terms), and random terms from $T=\{0,1\}$ with equal probability for $0$ and $1$. Have each student propose the auxiliary college she drew. Auxiliary colleges accept temporarily the most preferred subset of the union of the set of new proposal and the proposals they temporarily hold and reject all others. Proceed to Step $2.j+1$. 
\end{enumerate}
\item[Step 3.] {\bf from \SPDA excluding $B$ to \SPDA excluding $B$ with reduced quotas)} Set $Y=Y^1$. Let $\left(s_m,c_m,t_m\right)_{m=1}^{2k|B|}$ be the set of contracts ranked by students in $B$. For each $m$ from 1 to $2k|B|$:
\begin{enumerate}
	\item[Step 3.m] Reduce $q^{t_m}_{c_m}$ by 1 (unless it is already 0). If this does not result in $c_m$ violating its quota under $Y$,  proceed to step $3.m+1$.  Otherwise, reject the lowest ranked student, $s$, assigned to the auxiliary college $(c_m,t_m)$, and proceed to Step $3.m.1$.
	\item[Step 3.m.j] Denote by $P_1>0 $ a parameter (we will specify a value for $P_1$ at a later point in the proof (see \cref{fn:P_1}).
If $j>P_1\cdot \log n$ declare failure and stop.  
If $s$ has been rejected $2k$ times, proceed to step $3.m+1$. Otherwise draw independently and uniformly at random a college to which the student did not apply in a previous step of the algorithm, $c$, and random terms from $t\in\{0,1\}$ with equal probability for $0$ and $1$. If $c=\hat{c}$ declare failure and stop.
Have $s$  propose to the corresponding auxiliary college, $(c,t)$. The auxiliary college accepts temporarily the most preferred subset of $\left(Y\cap\left(S\times \{c\} \times \{t\}  \right)\right) \cup \left\{(s,c,t)\right\}$, and $Y$ is updated accordingly to $\Ch_{(c,t)}\left(Y \cup \left\{(s,c,t)\right\}\right)\cup \left(Y\setminus \left(S\times \{c\}\times \{t\}  \right)\right)$. If the auxiliary college rejects some student (at most one students can be rejected), set $s$ to be this student, and proceed to Step~$3.m.j+1$. Otherwise, proceed to Step~$3.m+1$. 
\item[Step 3.(2k|B|+1)] Set $Y^2$ to be the resulting allocation. proceed to Step 4.
\end{enumerate}
\item[Step 4.] {\bf Stochastic rejection chains:} We note that $\left|Y\cap \left(S\times\{\hat{c}\}\times\{\hat{t}\}  \right)\right|\leq \overline{q}$, and thus this set has no more than $2^{\overline{q}}$ different subsets. 
For each nonempty subset  $Z=\left\{(s_i,\hat{c},\hat{t})\right\}_{i=1}^{l_{Z}}$ of $Y\cap \left(S\times\{\hat{c}\}\times\{\hat{t}\}  \right)$:
\begin{enumerate}
	\item[Step 4.Z] Set $Y=Y^2$ and realized preferences to the part that was drawn  by the end of Step~$3$.
Have $(\hat{c},\hat{t})$ drop all students in $Z_S$ from its preference list (at most ${\overline{q}}$ students) and reject them from the college.  Set $i=b=1$, and proceed as follows:
	\item[Step 4.Z.i.b] If $i=l_{Z}+1$, proceed to the next subset. 
If $b>P_1 \log n$ declare failure and stop.
If $s_i$ has been rejected $2k$ times, move  to step $4.Z.i+1.1$. Otherwise draw independently and uniformly at random a college to which the student did not apply in a previous step of the algorithm, $c$, and random terms  $t\in \{0,1\}$ with equal probability for $0$ and $1$. If the college drawn is $\hat{c}$, declare failure and stop. Otherwise let the chosen auxiliary college, $(c,t)$, choose from $Y \cup \left\{(s_i,c,t)\right\}$. 

If $(c,t)$ does not reject any student, move to Step~$4.Z.i+1$. Otherwise, set $s_i$ to be the rejected student. Update $Y$ according to the choice of $(c,t)$. Proceed to Step~$4.Z.i.b+1$.

\end{enumerate}
	\item[Termination] After all possible subsets of $Z$ (fewer than $2^{\overline{q}}$) have been exhausted, stop. 
\end{enumerate}

\begin{lemma}\label{lem:what to bound}
For any $(\hat{c},\hat{t})\in C^n\times T$, the probability that $(\hat{c},\hat{t})$ has a unique assignment in all stable allocations is bounded below by the probability that the $(\hat{c},\hat{t})$ randomized algorithm terminates without declaring failure.  
\end{lemma}

\begin{proof}
By construction: 
\begin{itemize}
    \item The probability that the algorithm fails in Step $1$ is equal to the probability that the set $B$ has certain properties (regardless of whether or not $(\hat{c},\hat{t})$ has a unique assignment in all stable allocations). 
    \item The probability that the algorithm fails in Step $3$ conditional on not failing in Step $1$ is equal to the probability -- conditional on $B$ not having the aforementioned properties --  that reducing the quotas changes the \SPDA  assignment of $(\hat{c},\hat{t})$ or leads to long rejection chains. 
    \item The probability that the algorithm fails in Step $4$ conditional on not failing in earlier step is greater than the probability -- conditional on $B$ not having the aforementioned properties and quota reductions not changing the \SPDA  assignment of $(\hat{c},\hat{t})$ or leading to long rejection chains -- that the \SPDA and \SRDA assignments of $(\hat{c},\hat{t})$ under reduced quotas are equal (using Boole's inequality and \cref{lem:dropping}). 
\end{itemize}

The result therefore follows by \cref{bounds}.  
\end{proof}

\begin{lemma}[Typical Size of $B$]\label{lem:size of B}
Let $\{\tilde{\Gamma}^n\}_{n=1}^{\infty}$ be a regular sequence of markets such that the mode of randomization is $\textit{independent}$, and let $\tilde{B}^n$ denote the set of students who rank two contract with the same college. Then the probability that  $\left|\tilde{B}^n\right|> \left(1+\log^2(n)\right) \times \left|S^n\right|\times  8k^2/n$    is lower than $1/n^2$ for sufficiently large $n$.   
\end{lemma}

\begin{proof}
When $n>4k$ the probability that an agent finds more than one contract with some college acceptable is 
\[1-1\times \left(1-\frac{1}{2n-1}\right) \times \left(1-\frac{2}{2n-2}\right) \times \dots \times \left(1-\frac{2k-1}{2n-(2k-1)}\right)\leq 1-\left(1-\frac{4k}{n}\right)^{2k}\]
which, by Bernoulli's inequality,  is bounded above by
\[ 1-\left(1-\frac{8k^2}{n}\right)=\frac{8k^2}{n}.\] 
Since students' preferences are drawn independently of one another, the events in which one of them finds more than one contract with the same college acceptable are independent. Hence, using a multiplicative Chernoff bound, this implies that the probability that more than $\left(1+\log^2(n)\right) \times \left|S^n\right|\times  8k^2/n$ students find more than one contract with the same college acceptable is lower than \[\exp\left( -\log^2(n) \cdot   4k^2 \cdot \left|S^n\right|/n \right) \leq\exp\left( -\log^2(n) \cdot   4\lambda k^2   \right), \] 
and 
\[\underset{n\rightarrow \infty}{\lim} n^2\cdot \exp\left( -\log^2(n) \cdot   4\lambda k^2   \right) =0. \] 
\end{proof}

\begin{lemma}\label{lem:c_hat}
Given a regular sequence of markets, there exists $n'$ and $M>0$ such that for all $n>n'$ and any college $\hat{c}\in C^n$, the probability that a contract with $\hat{c}$ is ranked by any student in $B^n$ is bounded above by $M\log ^2(n) / n$. 
\end{lemma}

\begin{proof}
Unless $B$ is larger than $\left(1+\log^2(n)\right) \cdot \left|S^n\right|\cdot  8k^2/n$ (which, for large $n$, occurs with probability lower than $1/n^2$ by \cref{lem:size of B}) the number of contracts ranked by agents in $B$ is lower than $2k\cdot \left(1+\log^2(n)\right) \cdot \left|S^n\right| \cdot 8k^2/n$. Each of these contracts has probability of $1/n$ to be with $\hat{c}$. The result follows by Boole's inequality and the fact that $log ^2(n) / n>1/n^2$ for large $n$. 
\end{proof}

\paragraph{The Effect of Quotas Reduction on \SPDA }

\begin{lemma}[\citeauthor{im2005x}] \label{lem: IM} 
Given a regular sequence of markets, there exists $\alpha >0$ such that for any $n$, the probability that more than $\alpha n$ colleges in $C^n$ are not acceptable to any student in $S \setminus B$ is greater than $1-1/n^2$ under both the $independent$ and the $substitutes$ modes of randomization. 
\end{lemma}

\begin{proof}
The probability that a particular college is ranked by some student is  smaller than $1- (1-k/n)^{n/\lambda}$. Therefore, the expected number of colleges ranked by some student is lower than $M(n):=n\left( 1- \left(1-k/n\right)^{n/\lambda}\right)\geq n\left( 1-\exp(-k/\lambda) \right)$. The indicators of these events meet the conditions of Theorem 1.1 of \citet{impagliazzo2010constructive} (they are negatively correlated), and therefore the probability that  more than $(1+\epsilon)M(n)$ of them are realized is bounded above by $\exp(-2n\epsilon^2)$. 
\end{proof}

\begin{proof}[Proof of \cref{th:ind_alg}]
By \cref{lem:what to bound}, the probability that 
the Randomized $(\hat{c}, \hat{t})$ Algorithm fails is greater than the probability with which $Y\cap\left(S\times\{\hat{c}\}\times\{\hat{t}\}\right)$ varies for different choices of a stable allocation $Y$. 
 By \cref{lem:size of B,lem:c_hat} the probability that it fails in Step $1$ is lower than $A \cdot \log^2 (n) /n $ for sufficiently large $A>0$. 
Conditional on not failing in Step $1$ the arguments of \citeauthor{im2005x} (and \citeauthor{kp2009}) show that the probability that the algorithm fails is bounded above by $A' \cdot \log (n) /n$ for sufficiently large $A'>0$.\footnote{\label{fn:P_1}This argument is now standard in the literature \citep[e.g.][]{im2005x,kp2009, kpr2013, abh2014} and so we omit it for brevity. To summarize: 
by \cref{lem: IM} there exists  excess capacity in more than $\alpha n$ colleges with probability greater than $1-1/n^2$. The probability of failure due to long rejection chain is bounded above by the sum of the probability that there is no such excess capacity and the probability that a geometric random variable reaches values larger than $P_1\log (n)$ for a constant $P_1>0$ (conditional on not stopping, the probability of a chain to stop in the next step continues to be at least $\alpha$). The probability of a geometric random variable to exceed $\log(n)/\alpha$ steps is 
\[(1-\alpha)^{\log(n)/\alpha}=\left(\left(\left(1-\alpha \right)\right)^{1/\alpha}\right)^{\log(n)}<\frac{1}{e}^{\log(n)}=\frac{1}{n}\]
 In the algorithm, we choose the parameter $P_1\geq 1/\alpha$, allowing us to invoke this bound. 
Conditional on no failure  due to a long rejection chain, the probability of failure (since  an offer to $c$ was made) is lower than $2 P_1\log (n)/n$.} Furthermore, the probability that a student has multiple stable assignments is lower than the probability that she ranks a contract with a college that has multiple stable assignments, which is lower than $2k$ times the probability that a college has multiple stable assignments.

Finally, let $X_{(c,t)}$ denote the random variable that the set of students assigned to college $c$ under $t$ differs across stable allocations. 
By the discussion above, the exists $\Delta>0$ such that $\E\left[\sum_{(c,t)\in C\times T} X_{(c,t)}\right] < \Delta \log ^2 n $. Hence, by Markov's inequality 
\[ \Pr\left\{   \sum_{(c,t)\in C\times T} X_{(c,t)} > \Delta \sqrt{n} \log ^2 n   \right\}<\frac{1}{\sqrt{n}}.
\]
The proof for students is completely analogous. 
\end{proof}

\begin{remark*}
Our proof continues to hold under conditions that guarantee that \citeauthor{im2005x}'s short rejection chains argument remains valid. Examples include the variant of the $independent$ mode of randomization where students systematically prefer state-funded seats to self-funded seats in the same program as well as modes where students' preferences are drawn from heterogeneous distributions (e.g., some students prefer economics majors while others prefer physics).
\end{remark*}

\subsection*{Proof of \cref{prop:sub_vs_star1}}
\begin{proof} [Sketch of Proof of \cref{prop:sub_vs_star1}]
First, relying on \cref{lem:outcome_equiv}, we consider the related markets. Second, we note that under the $independent$ mode of randomization, the related markets meet the conditions of \citet{kp2009}.

Next, we consider the $substitues$ mode of randomization.
This case is slightly more complicated. The added complexity arises from the fact that it is likely that when an auxiliary college rejects a student, this student will displace another student at the ``sibling'' auxiliary college, who will go on to propose to the auxiliary college--resulting in an additional proposal. We note, however, that \cref{lem:dropping} can be strengthened, to require that the additional offer to the (auxiliary) college comes from a student that is preferred to all rejected students. Furthermore, students who get displaced from the ``sibling'' auxiliary college by strategically rejected students are not ranked sufficiently high to meet this condition themselves (as otherwise, they would not be displaced) and indirect chains of rejection are unlikely as in the argument of  \citet{im2005x} and \cite{kp2009} (see \cref{lem: IM} and \cref{fn:P_1} for details of the argument).
\end{proof}

\begin{remark*}
The proof extends to modes of randomization with correlation structures as in \citet{kp2009}, where some schools are more popular than others, or students' preferences are drawn from heterogeneous distributions (e.g., some students prefer economics majors while others prefer physics). The proof for modes of randomization that restrict $independent$ or $substitutes$ to rank funded positions higher then unfunded positions is simpler, as it rules out the complication of short rejection cycles discussed above.
\end{remark*}

\clearpage
\section{Supplemental Tables (for Online Publication)}
\Cref{app:sec:data} compares the realized and the benchmark assignments. \Cref{app:sec:summary_stat} presents additional summary statistics.

\subsection{Data}\label{app:sec:data}
This Appendix presents the number of applicants assigned in 2007 and in our benchmark assignment (\Cref{app:tab:benchmark}).

\begin{table}[htpb!!]
\centering\footnotesize
	\captionsetup{justification=centering}
\caption{Realized and benchmark assignments}
\vspace{-1.5em}
\label{app:tab:benchmark}
\begin{tabular}{l c c} \\ \hline\hline
 & Realized & \multicolumn{1}{c}{Benchmark} \\ 
 & \multicolumn{1}{c}{(1)} & \multicolumn{1}{c}{(2)}  \\ \hline
Assigned to a contract &       81,563 &       84,130 \\ 
Assigned to a state-funded contract &       48,726 &       48,725  \\  
Assigned to a self-funded contract &       32,837 &       35,405  \\ \hline\hline
\multicolumn{3}{p{10.5cm}}{{\it Notes}: The table presents the number of applicants assigned under each of the financial terms in the realized assignment in 2007 (column (1)) and in our benchmark assignment (column (2)).} \\ 
\end{tabular}
 
\end{table}

\subsection{Summary Statistics}\label{app:sec:summary_stat}
This appendix presents additional summary statistics. \Cref{app:tab:summary_stat_applicants} summarizes the means and standard deviations of the applicants' background characteristics. \Cref{app:tab:rol_characteristics_residence} provides summary statistics on the characteristics of applicants' ROLs by the type of the settlement where they reside -- an additional proxy for socioeconomic status (cf. \cref{tab:summary_stat_rols}).

Tables~\ref{tab:ses_ROL} and \ref{app:tab:mp_ROL} present the coefficients of a linear regression of ROL characteristics on disadvantaged status and 11th-grade GPA on the sample of applicants. We find that disadvantaged applicants are more (less) likely to rank state-funded (self-funded) contracts exclusively in their ROLs, and they are less likely to rank both contracts with the same study program consecutively. We also find that these differences cannot be explained by the differences in applicants' academic achievement, and thus, by the differences in their admission chances.

Next, we consider three alternative proxies for socioeconomic status: per-capita annual gross income, NABC-based SES index, and the type of the settlement where applicants reside (capital, county capital, town, and village). Tables~\ref{tab:ses_ROL_robust} and \ref{app:tab:mp_ROL_robust} show that conditional on academic achievement, applicants of higher socioeconomic status are more likely to rank at least one self-funded contract in their ROL and they are more likely to rank two contracts with the same study program consecutively. 

\clearpage
\begin{table}[h!]
\centering\footnotesize
\captionsetup{justification=centering}
\caption{Summary statistics on applicants' characteristics}
\vspace{-1.5em}
\label{app:tab:summary_stat_applicants}
\begin{tabular}{l c c c c} \\ \hline\hline
 & \multicolumn{1}{c}{Mean} & \multicolumn{1}{c}{SD} & \multicolumn{1}{c}{N} \\ 
 & \multicolumn{1}{c}{(1)} & \multicolumn{1}{c}{(2)} & \multicolumn{1}{c}{(3)} \\ \hline
 Disadvantaged & 0.05 & 0.21 &       108,854 \\ 
 Per-capita annual gross income (1000 USD, 2007 prices) & 9.90 & 2.30 &       106,934 \\ 
 NABC-based SES index & 0.22 & 0.37 &        84,455 \\ 
 Capital & 0.20 & 0.40 &       106,934 \\ 
 County capital & 0.21 & 0.41 &       106,934 \\ 
 Town & 0.33 & 0.47 &       106,934 \\ 
 Village & 0.26 & 0.44 &       106,934 \\ 
 11th-grade GPA (1--5) & 3.63 & 0.84 &        85,811 \\ 
 Female & 0.57 & 0.50 &       108,854 \\ 
 Number of alternatives in ROL & 3.71 & 2.21 &       108,854 \\ 
 Number of programs in ROL (observed) & 3.01 & 1.30 &       108,854 \\ 
\hline\hline \multicolumn{4}{p{13cm}}{{\it Notes}: The table reports mean values and standard deviations of applicant characteristics. Disadvantaged status is an indicator for claiming priority points for disadvantaged status. 11th-grade GPA is the average grades in mathematics, history, and Hungarian grammar and literature. Applicants' settlement of residence is missing for $       1,920$ applicants, NABC-based SES index is missing for $      24,399$ applicants, and 11th-grade GPA is missing for       23,043 applicants.}\end{tabular}
 
\end{table}

\clearpage
\begin{landscape}
\begin{table}[htp!]
\centering\footnotesize
\caption{Summary statistics on applicants' ROLs: Applicants' residence}
\label{app:tab:rol_characteristics_residence}
\vspace{-1.5em}
\begin{tabular}{l S[table-number-alignment = center-decimal-marker] S[table-number-alignment = center-decimal-marker] S[table-number-alignment = center-decimal-marker] S[table-number-alignment = center-decimal-marker] S[table-number-alignment = center-decimal-marker]} \\ \hline\hline
 & \multicolumn{1}{c}{{\parbox{2.1cm}{\centering All applicants (\%)}}} & \multicolumn{1}{c}{Capital (\%)} & \multicolumn{1}{c}{County town (\%)} & \multicolumn{1}{c}{Town (\%)} & \multicolumn{1}{c}{Village (\%)} \\ 
 & \multicolumn{1}{c}{(1)} & \multicolumn{1}{c}{(2)} & \multicolumn{1}{c}{(3)} & \multicolumn{1}{c}{(4)} & \multicolumn{1}{c}{(5)} \\ \hline
 \multicolumn{6}{l}{A. {\it Preference for funding}} \\ 
 State-funded contract exclusively & 51.4 & 37.1 & 52.5 & 54.3 & 58.2 \\ 
 Self-funded contract exclusively & 18.5 & 24.0 & 18.8 & 17.5 & 15.2 \\ 
State- and self-funded contracts & 30.1 & 38.9 & 28.7 & 28.2 & 26.6 \\ 
 Same study program consecutively & 15.5 & 20.5 & 14.6 & 14.3 & 13.5 \\ 
 Same study program consecutively on the top of the ROL & 11.6 & 15.0 & 11.3 & 10.7 & 10.2 \\  

 \multicolumn{6}{c}{} \\ 
 \multicolumn{6}{l}{B. {\it Preference for study characteristics}} \\ 
  Single program location & 49.4 & 64.5 & 50.0 & 44.4 & 43.5 \\ 
 Single university & 35.0 & 28.8 & 43.2 & 33.7 & 34.7 \\ 
 Single faculty & 26.3 & 24.4 & 30.4 & 25.2 & 25.8 \\ 
 Single field of study & 54.9 & 58.9 & 55.2 & 54.1 & 52.4 \\ 
 Single major & 29.6 & 33.5 & 30.5 & 28.7 & 26.9 \\ 
\hline \# of applicants & \multicolumn{1}{c}{     108,854} & \multicolumn{1}{c}{      21,731} & \multicolumn{1}{c}{      22,381} & \multicolumn{1}{c}{      35,290} & \multicolumn{1}{c}{      27,532} \\ 
\hline\hline \multicolumn{6}{p{21cm}}{{\it Notes}: The table reports summary statistics on applicants' ROL. Panel A focuses on preference for funding. Specifically, Panel A shows shows the share of applicants who rank state-funded (self-funded) contracts exclusively, the share of students who rank state-funded and self-funded contracts as well, and the share of applicants who rank the same study program with state-funding and self-funding consecutively, and who rank the same study program with state-funding and self-funding consecutively on the top of the ROL. Panel B focuses on preference for program characteristics. Specifically, Panel B shows the share of applicants who rank exclusively contracts that are in the same settlement (single program location), at a single university (single university), and at a single faculty of a university (single university), and the share of applicants who rank exclusively contracts in a single field of study, and in a single major. Column (1) presents these shares for all applicants, and columns (2)--(5) report these shares for applicants residing in the capital, county towns, towns, and villages, respectively.}\end{tabular}

\end{table}
\end{landscape}

\clearpage
\begin{table}[htpb!]
\centering\footnotesize
	\captionsetup{justification=centering}
\caption{Socioeconomic status, academic achievement, and ROL characteristics}
\vspace{-1.5em}
\label{tab:ses_ROL}
\begin{tabular}{l c c c c c c c} \\ \hline\hline
Dependent variable & \multicolumn{3}{c}{Ranked state-funded contract} & & \multicolumn{3}{c}{Ranked self-funded contract} \\
                   & \multicolumn{3}{c}{exclusively}                  & & \multicolumn{3}{c}{exclusively} \\ \cline{2-4} \cline{6-8}
 & (1) & (2) & (3) & & (4) & (5) & (6) \\ \hline
Disadvantaged & 0.294$^{***}$ & & 0.277$^{***}$& & -0.172$^{***}$ & &  -0.152$^{***}$\\ 
 & (0.006) & &  (0.006) & & (0.002) & & (0.002)  \\ 
11th-grade GPA (standardized) &  &  0.079$^{***}$ & 0.079$^{***}$ & & & -0.032$^{***}$ & -0.032$^{***}$ \\ 
 &  & (0.002) & (0.002) &&&  (0.001) &  (0.001) \\ \hline
 Mean outcome (non-disadvd.) & 0.500 & 0.500 & 0.500 && 0.192 & 0.192 & 0.192 \\ 
R-squared                   & 0.015 & 0.035 & 0.049 && 0.009 & 0.047 & 0.054 \\ \hline\hline
\multicolumn{8}{p{16.5cm}}{{\it Notes}: The table presents the coefficient of linear regressions of ROL characteristics (such as whether an applicant ranked state-funded contracts exclusively on her ROL (columns (1)--(3)) and whether an applicant ranked self-funded contracts exclusively on her ROL (columns (4)--(6))) on socioeconomic status and academic achievement. The sample includes $     108,854$ applicants. The regressions include indicators for missing values of standardized 11th-grade GPA ($      23,043$ applicants). Robust standard errors are in parentheses.} \\ 
\multicolumn{8}{p{16.5cm}}{***: p<0.01, **: p<0.05, *: p<0.1.}
\end{tabular}

\end{table}

\clearpage
\begin{table}[htb!]
\centering\footnotesize
\captionsetup{justification=centering}
\caption{Socioeconomic status, academic achievement, and ROL characteristics (2)}
\label{app:tab:mp_ROL}
\vspace{-1.5em}
\begin{tabular}{l c c c c c c c} \\ \hline\hline
Dependent variable & \multicolumn{3}{c}{Same study program} & & \multicolumn{3}{c}{Same study program consecutively} \\
                   & \multicolumn{3}{c}{consecutively}                  & & \multicolumn{3}{c}{on the top of the ROL} \\ \cline{2-4} \cline{6-8}
 & (1) & (2) & (3) & & (4) & (5) & (6) \\ \hline
Disadvantaged & -0.075$^{***}$ & & -0.075$^{***}$& & -0.057$^{***}$ & &  -0.056$^{***}$\\ 
 & (0.004) & &  (0.004) & & (0.004) & & (0.004)  \\ 
11th-grade GPA (standardized) &  &  -0.034$^{***}$ & -0.034$^{***}$ & & & -0.027$^{***}$ & -0.027$^{***}$ \\ 
 &  & (0.001) & (0.001) &&&  (0.001) &  (0.001) \\ \hline
 Mean outcome (non-disadvd.) & 0.158 & 0.158 & 0.158 && 0.119 & 0.119 & 0.119 \\ 
R-squared                   & 0.002 & 0.007 & 0.009 && 0.001 & 0.006 & 0.007 \\ \hline\hline
\multicolumn{8}{p{16.5cm}}{{\it Notes}: The table presents the coefficient of linear regressions of ROL characteristics (such as whether an applicant ranked the same study program consecutively (columns (1)--(3)) and whether an applicant ranked the same study program consecutively on the top of the ROL (columns (4)--(6))) on socioeconomic status and academic achievement. The sample includes $     108,854$ applicants. The regressions include indicators for missing values of standardized 11th-grade GPA ($      23,043$ applicants). Robust standard errors are in parentheses.} \\ 
\multicolumn{8}{p{16.5cm}}{***: p<0.01, **: p<0.05, *: p<0.1.}
\end{tabular}

\end{table}

\clearpage
\begin{landscape}
\begin{table}[htpb!]
\centering\footnotesize
\captionsetup{justification=centering}
\caption{Socioeconomic status, academic achievement, and ROL characteristics: Additional specifications}
\label{tab:ses_ROL_robust}
\vspace{-1.5em}
\begin{tabular}{l c c c c c c c} \\ \hline\hline
Dependent variable & \multicolumn{3}{c}{Ranked state-funded contract} & & \multicolumn{3}{c}{Ranked self-funded contract} \\
                   & \multicolumn{3}{c}{exclusively}                  & & \multicolumn{3}{c}{exclusively} \\ \cline{2-4} \cline{6-8}
 & (1) & (2) & (3) & & (4) & (5) & (6) \\ \hline
Per-capita annual gross income (1000 USD, 2007 prices) & -0.028$^{***}$ & & & & 0.013$^{***}$ & & \\ 
 & (0.001) & &  & & (0.001)  & &  \\ 
NABC-based SES index & & -0.098$^{***}$ & & & & -0.010$^{***}$ & \\ 
 & & (0.005) & & & & (0.003) & \\ 
Capital & &  & -0.208$^{***}$ & & & &  0.078$^{***}$ \\ 
 & & & (0.002) & & & & (0.002) \\ 
County capital & &  & -0.061$^{***}$ & & & &  0.035$^{***}$ \\ 
 & & & (0.003) & & & & (0.003) \\ 
Town & &  & -0.043$^{***}$ & & & &  0.023$^{***}$ \\ 
 & & & (0.003) & & & & (0.003) \\ 
11th-grade GPA (standardized) &  0.081$^{***}$ &  0.077$^{***}$ & 0.081$^{***}$ & & -0.033$^{***}$ & -0.017$^{***}$ & -0.033$^{***}$ \\ 
 & (0.002) & (0.002) & (0.002) && (0.001) &  (0.001) &  (0.001) \\ \hline
R-squared                   & 0.057 & 0.088 & 0.056 && 0.055 & 0.145 & 0.053 \\  \hline\hline 
\multicolumn{8}{p{20.5cm}}{{\it Notes}: The table presents the coefficient of linear regressions of ROL characteristics (such as whether an applicant ranked state-funded contracts exclusively on her ROL (columns (1)--(3)) and whether an applicant ranked self-funded contracts exclusively on her ROL (columns (4)--(6))) on socioeconomic status and academic achievement. The sample includes $     108,854$ applicants. The regressions include indicators for missing values of standardized 11th-grade GPA ($      23,043$ applicants), per-capita annual gross income ($       1,920$ applicants), NABC-based SES index ($      24,399$ applicants), and residence ($       1,920$ applicants). The omitted category in columns (3) and (6) is village. Robust standard errors are in parentheses.} \\ 
\multicolumn{8}{p{16.5cm}}{***: p<0.01, **: p<0.05, *: p<0.1.}
\end{tabular}

\end{table}
\end{landscape}

\clearpage
\begin{landscape}
\begin{table}[htb!]
\centering\footnotesize
\captionsetup{justification=centering}
\caption{Socioeconomic status, academic achievement, and ROL characteristics: Additional specifications (2)}
\label{app:tab:mp_ROL_robust}
\vspace{-1.5em}
\begin{tabular}{l c c c c c c c} \\ \hline\hline
Dependent variable & \multicolumn{3}{c}{Same study program} & & \multicolumn{3}{c}{Same study program consecutively} \\
                   & \multicolumn{3}{c}{consecutively}                  & & \multicolumn{3}{c}{on the top of the ROL} \\ \cline{2-4} \cline{6-8}
 & (1) & (2) & (3) & & (4) & (5) & (6) \\ \hline
Per-capita annual gross income (1000 USD, 2007 prices) & 0.009$^{***}$ & & & & 0.009$^{***}$ & & \\ 
 & (0.000) & &  & & (0.000)  & &  \\ 
NABC-based SES index & & 0.061$^{***}$ & & & & 0.044$^{***}$ & \\ 
 & & (0.004) & & & & (0.003) & \\ 
Capital & &  & 0.073$^{***}$ & & & &  0.050$^{***}$ \\ 
 & & & (0.004) & & & & (0.004) \\ 
County capital & &  & 0.014$^{***}$ & & & &  0.014$^{***}$ \\ 
 & & & (0.003) & & & & (0.003) \\ 
Town & &  & 0.010$^{***}$ & & & &  0.007$^{***}$ \\ 
 & & & (0.002) & & & & (0.002) \\ 
11th-grade GPA (standardized) &  -0.035$^{***}$ &  -0.041$^{***}$ & -0.035$^{***}$ & & -0.028$^{***}$ & -0.031$^{***}$ & -0.028$^{***}$ \\ 
 & (0.001) & (0.001) & (0.001) && (0.001) &  (0.001) &  (0.001) \\ \hline
R-squared                   & 0.011 & 0.010 & 0.013 && 0.010 & 0.008 & 0.009 \\  \hline\hline 
\multicolumn{8}{p{20.5cm}}{{\it Notes}: The table presents the coefficient of linear regressions of ROL characteristics (such as whether an applicant ranked the same study program consecutively (columns (1)--(3)) and whether an applicant ranked the same study program consecutively on the top of the ROL (columns (4)--(6))) on socioeconomic status and academic achievement. The sample includes $     108,854$ applicants. The regressions include indicators for missing values of standardized 11th-grade GPA ($      23,043$ applicants), per-capita annual gross income ($       1,920$ applicants), NABC-based SES index ($      24,399$ applicants), and residence ($       1,920$ applicants). The omitted category in columns (3) and (6) is village. Robust standard errors are in parentheses.} \\ 
\multicolumn{8}{p{16.5cm}}{***: p<0.01, **: p<0.05, *: p<0.1.}
\end{tabular}

\end{table}
\end{landscape}

\clearpage
\begin{landscape}
\begin{table}[htb!]
\centering\footnotesize
\caption{Program characteristics by excess self-funded capacity at the benchmark}
\label{tab:excess_capacity}
\begin{tabular}{l c c c c c c c c} \\ \hline\hline
 & \multicolumn{2}{c}{Programs with no excess} & & \multicolumn{2}{c}{Programs with excess} & & \multicolumn{2}{c}{Programs with excess} \\ 
 & \multicolumn{2}{c}{self-funded capacity} & & \multicolumn{2}{c}{self-funded capacity} & & \multicolumn{2}{c}{self-funded capacity (weighted)} \\ \cline{2-3}\cline{5-6}\cline{8-9}
 & \multicolumn{1}{c}{(1)} & \multicolumn{1}{c}{(2)} & & \multicolumn{1}{c}{(3)} & \multicolumn{1}{c}{(4)} & & \multicolumn{1}{c}{(5)} & \multicolumn{1}{c}{(6)}  \\ \hline
Periphery & 0.59 & (0.49) & & 0.85 & (0.36) & & 0.84 & (0.37) \\ 
Full-time & 0.90 & (0.30) & & 0.72 & (0.45) & & 0.71 & (0.45) \\ 
Field: economics and business & 0.09 & (0.28) & & 0.10 & (0.30) & & 0.20 & (0.40) \\ 
Field: humanities & 0.15 & (0.36) & & 0.16 & (0.37) & & 0.11 & (0.32) \\ 
Field: engineering & 0.05 & (0.21) & & 0.09 & (0.29) & & 0.09 & (0.28) \\ 
Field: natural science & 0.05 & (0.21) & & 0.05 & (0.21) & & 0.02 & (0.15) \\ 
Field: social science & 0.08 & (0.27) & & 0.09 & (0.29) & & 0.09 & (0.29) \\ 
State-funded priority-score cutoff (scale: 0--144) & 104.65 & (21.17) & & 100.37 & (20.96) & & 102.39 & (20.97) \\ 
\hline Excess self-funded seats per program & 0.00 & (0.00) & & 29.53 & (33.27) & & 29.53 & (33.27) \\
\# number of programs &  494 & - & &  941 & - & &  941 & - \\ \hline\hline
\multicolumn{9}{p{21.5cm}}{{\it Notes}: The table presents means and standard deviations of various characteristics of programs with and without excess self-funded capacity at the benchmark. Columns (1)--(4) use equal weights, while columns (5)--(6) weight programs by the number of excess self-funded seats at the benchmark. Standard deviations are in parentheses.}\end{tabular}

\end{table}
\end{landscape}



\clearpage
\setcounter{table}{0}
\setcounter{figure}{0}
\renewcommand{\thetable}{D\arabic{table}}
\renewcommand{\thefigure}{D\arabic{figure}}

\clearpage
\section{Exercising Market Power More Broadly and More Narrowly}\label{app:narrow_and_broad}
This appendix expands our empirical analysis in two directions. First, we assess the consequences of being able to detect only a subset of students over which market power is applied by the preference flip algorithm. Second, we present alternate algorithms that apply market power more broadly and analyze their consequences.

\subsection{Exercising Market Power More Narrowly}
We consider the robustness of our results on the preference flip algorithm to cases where market power is applied more narrowly. Specifically, we apply market power only over a subset of applicants and rule out the possibility to apply market power over other applicants even if they rank two contracts with the same program consecutively. This analysis is motivated by the possibility that information frictions prevent the designer from identifying the entire set $A$ (from the description of the preference flip algorithm).

Appendix~\cref{app:tab:narrow_mp} presents the number of assigned applicants when market power is applied more narrowly, and the increase in the number of assigned applicants relative to the benchmark \SRDA assignment. We find that the increase in the number of assigned applicants is roughly linear in the fraction of applicants over which market power can be applied.

\begin{table}[htpb!]
\centering
\footnotesize
\caption{Narrow market power}
\label{app:tab:narrow_mp}
\begin{tabular}{c c c}\hline\hline
Fraction subject to market power (\%) & Assigned under preference flip & Increase relative to \SRDA (\%)\\
(1) & (2) & (3) \\ \hline
100 & 85,688 & 1.9 \\
90 & 85,538 & 1.7 \\ 
80 & 85,391 & 1.5 \\
70 & 85,236 & 1.3 \\
60 & 85,079 & 1.1 \\
50 & 84,929 & 0.9 \\
40 & 84,774 & 0.8 \\
30 & 84,609 & 0.6 \\
20 & 84,454 & 0.4 \\
10 & 84,293 & 0.2 \\ \hline\hline
\multicolumn{3}{p{16cm}}{{\it Notes}: The table presents the number of assigned applicants under the preference flip algorithm, when market power is applied only on a subset of applicants. In each row, we consider a fraction of applicants over which market power can be applied (from 10 to 100 percent), and rule out the possibility to apply market power over other applicants even if they rank two contracts with the same program consecutively. We present the average result over 50 random draws of subsets of this size. Column 3 presents the increase in the number of assigned students relative to the benchmark (84,130).}
\end{tabular}
\end{table}

\subsection{Exercising Market Power More Broadly}
In this section, we present two additional stable algorithms that can be interpreted as applying market power more broadly. We then repeat the analysis from the main text using these algorithms instead of the Preferences Flip Algorithm. 

\paragraph{Greedy Reject Algorithm} Initialize a set $B^{\prime}\subseteq B$, where $B$ is the set of student--college pairs such that the student is assigned to a state-funded seat in the program under \SPDA. For each pair in $B^{\prime}$, remove from the student's initial ROL the state-funded contract with the college (the student's assignment) and run \SPDA on the resulting problem. 
If the resulting allocation is certainly stable (with respect to original preferences), stop and output the resulting allocation. Otherwise, remove some pairs from $B^{\prime}$ and repeat the process. 

\paragraph{Combined Algorithm} Initialize sets $A^{\prime}\subseteq A$ and $B^{\prime}\subseteq B$. For each pair in $A^{\prime}$, flip the order of the contracts with the college in the student's original ROL, so that the state-funded contract appears immediately \textit{after} the self-funded contract. For each pair in $B^{\prime}$, remove from the student's initial ROL the state-funded contract with the college (the student's assignment). Run \SPDA on the resulting problem. 
If the resulting allocation is certainly stable (with respect to original preferences), stop and output the resulting allocation. Otherwise, remove some pairs from $A^{\prime}$ or from $B^{\prime}$ and repeat the process.

\begin{proposition} \label{prop:algo_appendix}
For any initial set(s)  $B'\subseteq B$ (and $A'\subseteq A$) and any rule regarding the choice of elements to remove from $B'$ (and $A'$), the greedy reject (combined) algorithm results in a stable allocation. Furthermore, the resulting allocation may not allocate funding based on merit. 
\end{proposition}
\begin{proof}
The proof is identical to the proof of \cref{prop:algo} and therefore omitted. 
\end{proof}

We now repeat the analysis from the main text using the greedy reject algorithm and the combined algorithm instead of the preference flip algorithm.
Appendix Table~\ref{app:tab:core_size} compares the outcomes of the greedy reject and the combined algorithms to the \SRDA benchmark. Appendix Tables~\ref{tab:student_comparison_h1} and \ref{tab:student_comparison_h3} compare the characteristics of winners to losers under the greedy reject algorithm and the combined algorithm, respectively. Appendix Tables~\ref{tab:mobility1} and \ref{tab:mobility3} investigate geographic mobility under the greedy reject algorithm and the combined algorithm, respectively. The results are qualitatively similar to those from the main text, but their magnitudes are slightly larger. 

\begin{table}[h!]
\centering\footnotesize
\captionsetup{justification=centering}
\caption{Winners, losers, and the number of assigned applicants: Greedy reject and combined algorithms}
\vspace{-1.5em}
\label{app:tab:core_size}
\begin{tabular}{l S[table-number-alignment = center, table-format = 5.0] S[table-number-alignment = center, table-format = 5.0] S[table-number-alignment = center, table-format = 5.0] S[table-number-alignment = center, table-format = 5.0]} \\ \hline\hline
 & \multicolumn{1}{c}{Benchmark} & & \multicolumn{1}{c}{Greedy reject} & \multicolumn{1}{c}{Combined} \\ 
 & \multicolumn{1}{c}{SR-DA} & \multicolumn{1}{c}{SP-DA} & \multicolumn{1}{c}{algorithm} & \multicolumn{1}{c}{algorithm}  \\ 
 & \multicolumn{1}{c}{(1)} & \multicolumn{1}{c}{(2)} & \multicolumn{1}{c}{(3)} & \multicolumn{1}{c}{(4)} \\ \hline
\multicolumn{5}{l}{A. {\it Number of assigned applicants}}\\ 
Assigned to a contract &        84130 &        84130 &         85575 &        85847 \\ 
Assigned to a state-funded contract &        48725 &        48725 &        48710 &                48706 \\ 
Assigned to a self-funded contract &        35405 &        35405 &        36865 &               37141 \\ 
Assigned to a self-funded, full-time contract &        13321 &        13321 &        14410 &                14565  \\ 
\multicolumn{5}{l}{}\\ 
B. {\it Winners} & &            8&         5056 &          5283  \\ 
 -- Newly assigned &  &            0&         1969 &          2346  \\ 
 -- New program & &            8 &         2168 &          2498  \\ 
 -- Same program, preferred financial terms & &            0 &          919 &                    439  \\ 
\multicolumn{5}{l}{}\\ 
C. {\it Losers} &  &            0 &         4570 &                  4734   \\ 
 -- Newly unassigned & &            0 &          524 &           629  \\ 
 -- New program & &            0 &          645 &                    762  \\ 
 -- Same program, less preferred financial terms &   &            0 &         3401 &                  3343 \\ \hline\hline
\multicolumn{5}{p{15.5cm}}{{\it Notes}: The table presents the number of assigned applicants under various stable assignments (Panel A). Panels B and C describe the number of winners and losers from changing the benchmark \SRDA to \SPDA and to our alternate algorithms. An applicant is a winner if she is assigned to a contract she ranked higher than her contract in the benchmark assignment. An applicant is a loser if she is assigned to a contract she ranked lower than her assigned contract in the benchmark assignment or if she becomes unassigned. Column (1) shows the benchmark assignment. Column (2) presents the \SPDA assignment. Column (3) presents the greedy reject algorithm assignment and column (4) presents the combined algorithm assignment.} \\ 
\end{tabular}

\end{table}

\clearpage
\begin{table}[htpb!]
\centering\footnotesize
\captionsetup{justification=centering}
\caption{Characteristics of winners and losers: Greedy reject algorithm}
\label{tab:student_comparison_h1}
\vspace{-1.5em}
\begin{tabular}{l S[table-number-alignment = center-decimal-marker] S[table-number-alignment = center-decimal-marker] S[table-number-alignment = center-decimal-marker]} \\ \hline\hline
 & \multicolumn{1}{c}{Winners} & \multicolumn{1}{c}{Losers} & \multicolumn{1}{c}{p-values: (1)=(2)} \\ 
 & \multicolumn{1}{c}{(1)} & \multicolumn{1}{c}{(2)} & \multicolumn{1}{c}{(3)}  \\ \hline
 Disadvantaged & 0.057 &  0.026 & 0.000 \\ 
 Per-capita annual gross income (1000 USD, 2007 prices) & 9.735 &  10.385 & 0.000 \\ 
 NABC-based SES index & 0.180 &  0.306 & 0.000 \\ 
 Capital & 0.188 &  0.277 & 0.000 \\ 
 County capital & 0.192 &  0.217 & 0.003 \\ 
 Town & 0.329 &  0.298 & 0.001 \\ 
 Village & 0.291 &  0.208 & 0.000 \\ 
 11th-grade GPA (1--5) & 3.473 &  3.704 & 0.000 \\ 
 Female & 0.562 &  0.553 & 0.355 \\ 
\hline Number of applicants & \multicolumn{1}{c}{       5,056} & \multicolumn{1}{c}{       4,570} & \multicolumn{1}{c}{       9,626} \\ \hline\hline 
\multicolumn{4}{p{15.5cm}}{{\it Notes}: The table reports the mean values of characteristics of winners and losers from changing the benchmark \SRDA to the greedy reject algorithm. An applicant is a winner if she is assigned to a contract she ranked higher than her benchmark assignment. An applicant is a loser if she is assigned to a contract she ranked lower than her assigned contract in the benchmark assignment or if she becomes unassigned. Column (3) presents p-values for the equality of mean characteristics of winners and losers.}\end{tabular}

\end{table}

\clearpage
\begin{table}[htpb!]
\centering\footnotesize
\captionsetup{justification=centering}
\caption{Characteristics of winners and losers: Combined algorithm}
\label{tab:student_comparison_h3}
\vspace{-1.5em}
\begin{tabular}{l S[table-number-alignment = center-decimal-marker] S[table-number-alignment = center-decimal-marker] S[table-number-alignment = center-decimal-marker]} \\ \hline\hline
 & \multicolumn{1}{c}{Winners} & \multicolumn{1}{c}{Losers} & \multicolumn{1}{c}{p-values: (1)=(2)} \\ 
 & \multicolumn{1}{c}{(1)} & \multicolumn{1}{c}{(2)} & \multicolumn{1}{c}{(3)}  \\ \hline
 Disadvantaged & 0.061 &  0.026 & 0.000 \\ 
 Per-capita annual gross income (1000 USD, 2007 prices) & 9.718 &  10.413 & 0.000 \\ 
 NABC-based SES index & 0.177 &  0.305 & 0.000 \\ 
 Capital & 0.183 &  0.282 & 0.000 \\ 
 County capital & 0.191 &  0.214 & 0.005 \\ 
 Town & 0.331 &  0.297 & 0.000 \\ 
 Village & 0.295 &  0.207 & 0.000 \\ 
 11th-grade GPA (1--5) & 3.483 &  3.689 & 0.000 \\ 
 Female & 0.570 &  0.556 & 0.160 \\ 
\hline Number of applicants & \multicolumn{1}{c}{       5,283} & \multicolumn{1}{c}{       4,734} & \multicolumn{1}{c}{      10,017} \\ \hline\hline 
\multicolumn{4}{p{15.5cm}}{{\it Notes}: The table reports the mean values of characteristics of winners and losers from changing the benchmark \SRDA to the combined algorithm. An applicant is a winner if she is assigned to a contract she ranked higher than her benchmark assignment. An applicant is a loser if she is assigned to a contract she ranked lower than her assigned contract in the benchmark assignment or if she becomes unassigned. Column (3) presents p-values for the equality of mean characteristics of winners and losers.}
\end{tabular}

\end{table}

\clearpage

\begin{table}[htpb!]
\captionsetup{justification=centering}
\centering\footnotesize
\caption{Geographic mobility: Greedy reject algorithm}
\label{tab:mobility1}
\vspace{-1.5em}
\begin{tabular}{l c c c c c c c c} \\ \hline\hline
 & \multicolumn{2}{c}{Mover} & & \multicolumn{2}{c}{Assigned to periphery} & & \multicolumn{2}{c}{Moved to capital} \\ \cline{2-3} \cline{5-6} \cline{8-9}
& \multicolumn{1}{c}{\SRDA} & \multicolumn{1}{c}{GRA} & & \multicolumn{1}{c}{\SRDA} & \multicolumn{1}{c}{GRA} & & \multicolumn{1}{c}{\SRDA} & \multicolumn{1}{c}{GRA} \\ 
 & \multicolumn{1}{c}{(1)} & \multicolumn{1}{c}{(2)} & & \multicolumn{1}{c}{(3)} & \multicolumn{1}{c}{(4)} & & \multicolumn{1}{c}{(5)} & \multicolumn{1}{c}{(6)}\\ \hline
\multicolumn{9}{l}{A. {\it Total}} \\ 
All assigned applicants &       51,221 &       52,249 & &       47,845 &       48,876 & &       22,444 &       22,804 \\ 
\multicolumn{9}{l}{} \\ 
\multicolumn{9}{l}{B. {\it Winners}} \\
Newly assigned (N=       1,969) & -- &        1,249 & & -- &        1,362 & & -- &          402 \\
New program (N=       2,168) &        1,379 &        1,415 & &        1,323 &        1,147 & &          521 &          660 \\
\multicolumn{9}{l}{} \\
\multicolumn{9}{l}{C. {\it Losers}} \\
Newly unassigned (N=         524) &          279 & -- & &          226 & -- & &          143 & -- \\
New program (N=         645) &          348 &          370 & &          256 &          327 & &          191 &          153 \\ \hline\hline 
\multicolumn{9}{p{16cm}}{{\it Notes}: The table presents the number of movers (i.e., applicants whose assigned study program is not located in the county where they reside), of applicants assigned to the periphery (i.e., not to the capital), and of applicants who moved to the capital (i.e., applicants who get assigned to a study program in the capital but do not reside there) under the greedy reject algorithm (GRA) and in the benchmark \SRDA assignments. Panel A presents the total number of applicants with these characteristics. Panels B and C focus on winners and losers who are assigned to a different study program.}\end{tabular}

\end{table}

\clearpage

\begin{table}[htpb!]
\captionsetup{justification=centering}
\centering\footnotesize
\caption{Geographic mobility: Combined algorithm}
\label{tab:mobility3}
\vspace{-1.5em}
\begin{tabular}{l c c c c c c c c} \\ \hline\hline
 & \multicolumn{2}{c}{Mover} & & \multicolumn{2}{c}{Assigned to periphery} & & \multicolumn{2}{c}{Moved to capital} \\ \cline{2-3} \cline{5-6} \cline{8-9}
& \multicolumn{1}{c}{\SRDA} & \multicolumn{1}{c}{CA} & & \multicolumn{1}{c}{\SRDA} & \multicolumn{1}{c}{CA} & & \multicolumn{1}{c}{\SRDA} & \multicolumn{1}{c}{CA} \\ 
 & \multicolumn{1}{c}{(1)} & \multicolumn{1}{c}{(2)} & & \multicolumn{1}{c}{(3)} & \multicolumn{1}{c}{(4)} & & \multicolumn{1}{c}{(5)} & \multicolumn{1}{c}{(6)}\\ \hline
\multicolumn{9}{l}{A. {\it Total}} \\ 
All assigned applicants &       51,221 &       52,415 & &       47,845 &       49,073 & &       22,444 &       22,851 \\ 
\multicolumn{9}{l}{} \\ 
\multicolumn{9}{l}{B. {\it Winners}} \\
Newly assigned (N=       2,346) & -- &        1,463 & & -- &        1,620 & & -- &          472 \\
New program (N=       2,498) &        1,582 &        1,624 & &        1,526 &        1,316 & &          602 &          769 \\
\multicolumn{9}{l}{} \\
\multicolumn{9}{l}{C. {\it Losers}} \\
Newly unassigned (N=         629) &          341 & -- & &          265 & -- & &          183 & -- \\
New program (N=         762) &          419 &          449 & &          294 &          377 & &          232 &          183 \\ \hline\hline 
\multicolumn{9}{p{16cm}}{{\it Notes}: The table presents the number of movers (i.e., applicants whose assigned study program is not located in the county where they reside), of applicants assigned to the periphery (i.e., not to the capital), and of applicants who moved to the capital (i.e., applicants who get assigned to a study program in the capital but do not reside there) under the combined algorithm and in the benchmark \SRDA assignments. Panel A presents the total number of applicants with these characteristics. Panels B and C focus on winners and losers who are assigned to a different study program.}\end{tabular}

\end{table}

\clearpage
\setcounter{table}{0}
\setcounter{figure}{0}
\renewcommand{\thetable}{E\arabic{table}}
\renewcommand{\thefigure}{E\arabic{figure}}

\section{Robustness Analysis (for Online Publication)}\label{app:section:empirical_findings}
This Appendix shows that our empirical findings are robust to the way we handle minor inconsistencies in our data. Our data contain 3,686 applicants (3.4 percent of all applicants) whose reported assignment and reported priority scores are inconsistent with stability (their reported priority score does not exceed the priority-score cutoff of their reported assignment, or there is an alternative in their ROL from which they are rejected even though their reported priority score exceeds the priority-score cutoff of this alternative). These inconsistencies largely stem from applicants having a reported priority score of zero. Our main approach, which is presented in the text, holds the assignment of applicants with such inconsistencies fixed to their reported assignment, and makes the corresponding seats unavailable to others. 

In this appendix, we consider two alternative approaches to verify that this issue does not drive our empirical findings:
\begin{itemize}
	\item Approach 1: We hold the assignment of applicants with inconsistencies fixed to their reported assignment, but we keep the corresponding seats available to others (i.e., we do not reduce the corresponding quotas)
	\item Approach 2: We do not hold the assignment of applicants with inconsistencies fixed, and we keep the corresponding seats available to others.
\end{itemize}

Approach~1 changes the \SRDA (benchmark) assignment of 4,439 applicants and increases the number of assigned applicants in the benchmark by 1,989 relative to our main approach. Approach~2 changes the \SRDA (benchmark) assignment of 6,052 applicants and increases the number of assigned applicants in the benchmark by 162 relative to our main approach. In spite of these changes, \Cref{app:fig:winners_and_losers_robust} shows that our main findings continue to hold.


Panel~A of \cref{app:fig:winners_and_losers_robust} presents the difference between the number of assigned applicants under the preference flip algorithm and under \SRDA. In the main text, this corresponds to the difference between column (3) and column (1) in Panel~A of \cref{tab:core_size}. Panel~B presents the number of winners from changing the benchmark to the preference flip algorithm. In the main text, this corresponds to column (3) in Panel~B of \cref{tab:core_size}. Panel~C presents the number of losers from changing the benchmark to the preference flip algorithm. In the main text, this corresponds to column (3) in Panel~C of \cref{tab:core_size}. Since the numbers in each of the panels are almost identical, we conclude that our main findings continue to hold.


\clearpage
\begin{figure}[htpb!]
	\centering
	\captionsetup{justification=centering}
	\caption{Winners, losers, and the number of assigned applicants: Robustness}
	\label{app:fig:winners_and_losers_robust}
	\includegraphics[scale = 0.8]{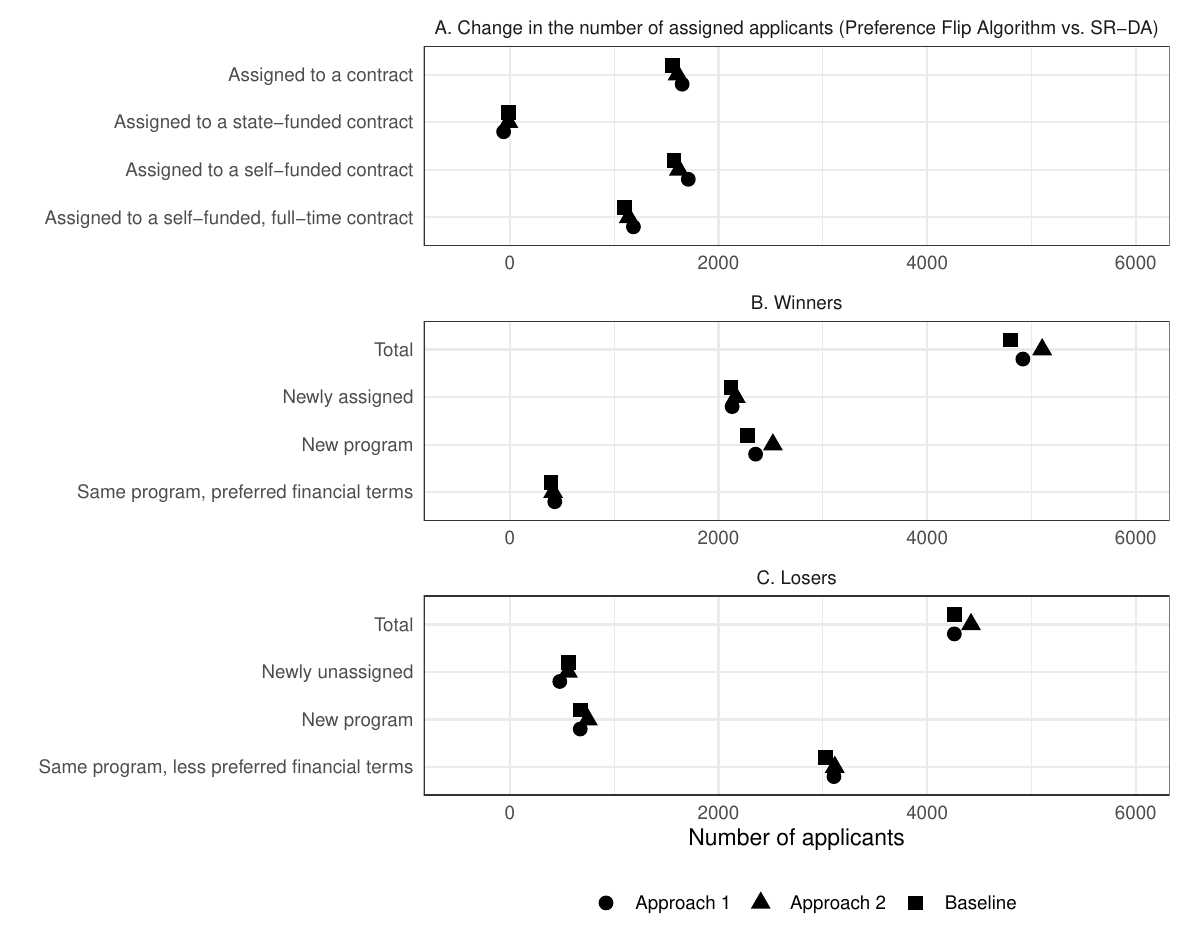}
	\captionsetup{justification=justified}
	\caption*{{\small {\it Notes}: This figure studies the robustness of our empirical findings to the way we handle minor inconsistencies in our data. Panel~A presents the difference between the number of assigned applicants under the preference flip algorithm and under \SRDA. In the main text (baseline), this corresponds to the difference between column (3) and column (1) in Panel~A of \cref{tab:core_size}. Panel~B presents the number of winners from changing the benchmark to the preference flip algorithm. In the main text, this corresponds to column (3) in Panel~B of \cref{tab:core_size}. Panel~C presents the number of losers from changing the benchmark to the preference flip algorithm. In the main text, this corresponds to column (3) in Panel~C of \cref{tab:core_size}. The baseline corresponds to the analysis in the main text.}}
\end{figure}



\clearpage
\section{Variable Description (for Online Publication)}
This appendix describes the construction of variables in\cref{app:tab:summary_stat_applicants}.

\begin{itemize}
	\item {\bf Disadvantaged status}: Applicants receive priority points for having a low socioeconomic background. Our administrative data report these priority points. If an applicant received priority points for this reason in any of the alternatives in her ROL, we label this applicant as disadvantaged. Source: Administrative data on college admissions.
	\item {\bf NABC-based SES index}: First, we compute the NABC-based SES index for each survey respondent between 2008 and 2012 (5 years) in grade 10. We then compute the high-school-specific means over the 2008--2012 period. Finally, we merge the high-school-specific NABC-based SES index with the applicants using their high-school identifier. The high-school identifier is missing for 15,521 applicants. Source: National Assessment of Basic Competencies (NABC).
	\item {\bf Per-capita annual gross income}: Settlement-level income per population. Applicants come from 2,804 different settlements. Average exchange rate in 2007: 183HUF/USD. Source: T-STAR dataset (\url{http://adatbank.krtk.mta.hu/adatbazisok___tstar})
	\item {\bf Capital, county capital, town, village}: Dummy variables for the type of settlement where the applicants reside. Source: Administrative data on college admissions.
	\item {\bf 11th-grade GPA}: Average grade in mathematics, Hungarian grammar and literature, and history in grade 11. Source: Administrative data on college admissions.
	\item {\bf Female}: Dummy variable for being female. Source: Administrative data on college admissions.
	\item {\bf Number of alternatives in ROL}: Our administrative data include information on the number of alternatives in each applicant's ROL. Source: Administrative data on college admissions.
	\item {\bf Number of programs in ROL (observed)}: Our administrative data report the first 6 alternatives in an applicant's ROL as well as the applicant's realized assignment, in case it was ranked lower. We compute the number of programs in an applicant's ROL based on this information. Source: Administrative data on college admissions.
\end{itemize}

\clearpage
\section{Evidence from the Israeli Psychology Master's Match (for Online Publication)}\label{app:ipmm}
In this Appendix, we evaluate our theoretical predictions using data from the Israeli Psychology Master's Match (IPMM). In this market, unlike in Hungary, applicants are not heterogeneous in the sensitivity to financial terms. Consistent with our theory, we find that the preference flip algorithm does not increase the number of applicants admitted to college relative to the outcome of \SPDA. 

\subsection{Background and Data}
We use data from the 2014 and 2015 rounds of the IPMM. In each of these years, there are approximately 1,000 applicants, of which approximately 600 are assigned. 

Several departments offer a limited number of no-strings-attached scholarships to selected students. In 2014, 10 programs in 3 departments offered admission under multiple financial terms. The number increased to 15 programs in 4 departments in 2015. Funding levels ranged from approximately \$2,000 a year to approximately \$20,000 a year. The number of available scholarships was 25 in 2014 and 36 in 2015. The centralized clearinghouse uses a version of \SPDA which allows programs to offer contracts with multiple financial terms, and allows applicants to rank these alternatives separately.\footnote{While colleges' preferences in the IPMM are more complex than in Hungarian college admissions \citep{hrspp}, they meets the hidden substitutes condition of \citet{hk2015} and so \SPDA is stable and strategyproof.} 

\subsection{The Distribution of Applicants' Rank-Order Lists}
Applicants' ROLs include 4.32 contracts on average (SD=4.14).  About 37.2\% of
the ROLs include at least one contract with a program that offered admission
under multiple financial terms. Of these, only 3.4\% include only the
funded contract in some program but not the unfunded contract (cf. \Cref{tab:summary_stat_rols}). 

Among the ROLs that include both a funded and an unfunded contract with the same program, more than 90\% ranked the funded contract first. Among these ROLs, 82.6\% ranked both contracts consecutively, and the mean number of contracts ranked between a funded contract and the corresponding unfunded contract was 0.34. 

In summary, an overwhelming majority of applicants in the IPMM are not sensitive to financial terms.

\subsection{The Preference Flip Allocation}
We next compare the stable allocations resulting from \SPDA and from the preference flip algorithm.\footnote{The proof that the preference flip algorithm results in a stable allocation continues to hold in the more general preference structure of the IPMM (since \SPDA continues to be stable).} 
We implement the preference flip algorithm initializing $A' = A$ and with multiple rules of removal: in addition to the rule used in the main analysis, we also remove the lowest-ranked student (instead of the highest-ranked student), and furthermore we use 100 randomly generated rules of removal.
Consistent with \cref{prop:insens}, none of these implementations result in an increase in the number of assigned students, and the largest number of affected students in 4.

\end{appendices}
\end{document}